\def \bs {\boldsymbol}
\def \Pr {\mathbb{P}}
\def \I {\mathrm{I}}
\def \R {\mathbb{R}}
\def \e {\mathbf{e}}
\def \a {\bs{a}}
\def \sF {\mathcal{F}}
\def \X {\bs{X}}
\def \y {\bs y}
\def \x {\bs x}
\def \hx {\hat{\x}}
\def \z {\bs z}
\def \w {\bs w}
\def \h {\bs h}
\def \H {\mathcal{H}}
\def \u {\bs u}
\def \v {\bs v}
\def \g {\bs g}
\def \zero {\bs 0}
\def \N {\mathcal N}
\def \E {\mathbb E}
\def \df {\stackrel{\mathrm{def}}{=}}
\def \bsupp {\mathrm{bsupp}}
\def \bone {\bs 1}
\def \sup {\mathrm{sup}}
\def \nn {\nonumber}
\def \ker {\mathrm{Ker}}
\def \b {\mathrm{b}}
\def \sI {\mathcal{I}}
\definecolor{Dblue}{rgb}{0,0,1}
\definecolor{Dbrown}{rgb}{0.59,0.4,0}
\definecolor{Dred}{rgb}{0.64,0,0}
\definecolor{Dgreen}{rgb}{0,0.4,0}
\begin{document}

\title{Fixed point theory and semidefinite programming for computable performance analysis of block-sparsity recovery\thanks{This work was supported by ONR Grant N000140810849, and NSF Grants CCF-1014908 and CCF-0963742.}\\
}

\titlerunning{Computable performance analysis of block-sparsity recovery}        

\author{Gongguo Tang         \and
        Arye Nehorai 
}


\institute{Gongguo Tang \at
               Washington University in St. Louis, St. Louis, MO 63130-1127 \\
              \email{gt2@ese.wustl.edu}           
           \and
           Arye Nehorai \at
                          Washington University in St. Louis, St. Louis, MO 63130-1127 \\
                          \email{nehorai@ese.wustl.edu}
}

\date{Received: date / Accepted: date}

\maketitle

\begin{abstract}
In this paper, we employ fixed point theory and semidefinite programming to compute the performance bounds on convex block-sparsity recovery algorithms. As a prerequisite for optimal sensing matrix design, a computable performance bound would open doors for wide applications in sensor arrays, radar, DNA microarrays, and many other areas where block-sparsity arises naturally. We define a family of goodness measures for arbitrary sensing matrices as the optimal values of certain optimization problems. The reconstruction errors of convex recovery algorithms are bounded in terms of these goodness measures. We demonstrate that as long as the number of measurements is relatively large, these goodness measures are bounded away from zero for a large class of random sensing matrices, a result parallel to the probabilistic analysis of the block restricted isometry property. As the primary contribution of this work, we associate the goodness measures with the fixed points of functions defined by a series of semidefinite programs. This relation with fixed point theory yields efficient algorithms with global convergence guarantees to compute the goodness measures.
\keywords{Block-sparsity recovery \and Compressive sensing \and Fixed point theory \and Semidefinite programming}
\subclass{47H10\and 09C22\and 90C90\and 94A12}
\end{abstract}

%

\section{Introduction}
\label{sec:intro}
\noindent

Recovery of signals with low-dimensional structures, in particular, sparsity \cite{Candes2008IntroCS}, block-sparsity \cite{Eldar2009BlockSparse}, and low-rankness \cite{candes2009lowrank}, has found numerous applications in signal sampling, control, inverse imaging, remote sensing, radar, sensor arrays, image processing, computer vision, and so on. Mathematically, the recovery of signals with low-dimensional structures aims to reconstruct a signal with a prescribed structure, usually from noisy linear measurements, as follows:
\begin{eqnarray}
  \y &=& A\x + \w,
\end{eqnarray}
where $\x \in \R^N$ is the signal to be reconstructed, $\y \in \R^m$ is the measurement vector, $A \in \R^{m\times N}$ is the sensing/measurement matrix, and $\w \in \R^m$ is the noise. For example, a sparse signal is assumed to have only a few non-zero coefficients when represented as a linear combination of atoms from an orthogonal basis or from an overcomplete dictionary. Similarly, for a block-sparse signal, the non-zero coefficients are assumed to cluster into blocks.

A theoretically justified way to exploit the low-dimensional structure in recovering $\x$ is to minimize a convex function that is known to enforce that low-dimensional structure. Examples include using the $\ell_1$ norm to enforce sparsity, block-$\ell_1$ norm (or $\ell_2/\ell_1$ norm) to enforce the block-sparsity, and the nuclear norm to enforce the low-rankness. The performance of these convex enforcements is usually analyzed using variants of the restricted isometry property (RIP) \cite{Candes2008RIP, Eldar2009BlockSparse, candes2009lowrank}. Upper bounds on the $\ell_2$ norm of the error vectors for various recovery algorithms have been expressed in terms of the RIP. Unfortunately, it is extremely difficult to verify that the RIP of a specific sensing matrix satisfies the conditions for the bounds to be valid, and even more difficult to directly compute the RIP itself. Actually, the only known sensing matrices with nice RIPs are certain types of random matrices \cite{Juditsky2010Verifiable}.

In this paper, we investigate the recovery performance for block-sparse signals. Block-sparsity arises naturally in applications such as sensor arrays \cite{Willsky2005Source}, radar \cite{Sen2011Multi}, multi-band signals \cite{mishali2009blind}, and DNA microarrays \cite{Parvaresh2008DNA}. A particular area that motivates this work is the application of block-sparse signal recovery in radar systems. The signals in radar applications are usually sparse because there are only a few targets to be estimated among many possibilities. However, a single target manifests itself simultaneously in the sensor domain, the frequency domain, the temporal domain, and the reflection-path domain. As a consequence, the underlying signal would be block-sparse when the radar system observes the targets from several of these domains \cite{sen2011ofdm, Sen2011Multi}.

The aforementioned applications require a computable performance analysis of block-sparsity recovery. While it is perfectly reasonable to use a random sensing matrix for signal sampling, the sensing matrices in other applications are far from random and actually depend on the underlying measurement devices and the physical processes that generate the observations. Due to the computational challenges associated with the RIP, it is necessary to seek computationally more amenable goodness measures of the sensing matrices. Computable performance measures would open doors for wide applications. They would provide a means to pre-determine the performance of the sensing system before its implementation and the taking of measurements. In addition, in applications where we have the flexibility to choose the sensing matrix, computable performance analysis would form a bisis for optimal sensing matrix design.

We preview our contributions. First of all, we define a family of goodness measures of the sensing matrix, and use them to derive performance bounds on the block-$\ell_\infty$ norm of the recovery error vector. Performance bounds using other norms are expressed using the block-$\ell_\infty$ norm. Our preliminary numerical results show that these bounds are tighter than the block RIP based bounds. Second and most important, we develop a fixed point iteration framework to design algorithms that efficiently compute lower bounds on the goodness measures for arbitrary sensing matrices. Each fixed point iteration solves a series of semidefinite programs. The fixed point iteration framework also demonstrates the algorithms' convergence to the global optima from any initial point. As a by-product, we obtain a fast algorithm to verify the sufficient condition guaranteeing exact block-sparsity recovery via block-$\ell_1$ minimization. Finally, we show that the goodness measures are non-degenerate for subgaussian and isotropic random sensing matrices as long as the number of measurements is relatively large, a result parallel to that of the block RIP for random matrices.

This work extends verifiable and computable performance analysis from sparse signal cases \cite{tang2011linf, Juditsky2010Verifiable, dAspremont2007sparsePCA, dAspermont2010Nullspace} to block-sparse signal cases. There are several technical challenges that are unique to the block-sparse case. In the sparse setting, the optimization subproblems are solved exactly by linear programming or second-order cone programming. However, in the block-sparse setting, the associated subproblems are not readily solvable and we need to develop semidefinite relaxations to compute an upper bound on the optimal values of the subproblems. The bounding technique complicates the situation as it is not clear how to obtain bounds on the goodness measures from these subproblem bounds. We develop a systematic fixed point theory to address this problem. 

The rest of the paper is organized as follows. In Section \ref{sec:model}, we introduce notations and we present the measurement model, three convex relaxation algorithms, and the sufficient and necessary condition for exact block-$\ell_1$ recovery. In section \ref{sec:bounds}, we derive performance bounds on the block-$\ell_\infty$ norms of the recovery errors for several convex relaxation algorithms. Section \ref{sec:random} is devoted to the probabilistic analysis of our block-$\ell_\infty$ performance measures. In Section \ref{sec:computation}, we design algorithms to verify a sufficient condition for exact block-$\ell_1$ recovery in the noise-free case, and to compute the goodness measures of arbitrary sensing matrices through fixed point iteration, bisection search, and semidefinite programming.  We evaluate the algorithms' performance in Section \ref{sec:numerical}. Section \ref{sec:conclusions} summarizes our conclusions.

\section{Notations, Measurement Model, and Recovery Algorithms}\label{sec:model}
In this section, we introduce notations and the measurement model, and present three block-sparsity recovery algorithms.

For any vector $\x \in \R^{np}$, we partition the vector into $p$ blocks, each of length $n$. More precisely, we have $\x = \left[\x_1^T, \ldots, \x_p^T\right]^T$, with the $i$th block $\x_i \in \R^n$. The block-$\ell_q$ norms for $1 \leq q \leq \infty$ associated with this block-sparse structure are defined as
\begin{eqnarray}
  \|\x\|_{\b q} &=& \left(\sum_{i=1}^p \|\x_i\|_2^q\right)^{1/q},\ 1\leq q < \infty
\end{eqnarray}
and
\begin{eqnarray}
  \|\x\|_{\b \infty} &=& \max_{1\leq i\leq p} \|\x_i\|_2,\ q = \infty.
\end{eqnarray}
The canonical inner product in $\R^{np}$ is denoted by $\left<\cdot, \cdot\right>$, and the $\ell_2$ (or Euclidean) norm is $\|\x\|_2 = \sqrt{\left<\x,\x\right>}$. Obviously, the block-$\ell_2$ norm is the same as the ordinary $\ell_2$ norm. We use $\|\cdot\|_\diamond$ to denote a general norm, and use $\otimes$ to denote the Kronecker product.

The block support of $\x \in \R^{np}$, $\bsupp(\x) = \{i: \|\x_i\|_2 \neq 0\}$, is the index set of the non-zero blocks of $\x$. The size of the block support, denoted by the block-$\ell_0$ ``norm" $\|\x\|_{\b 0}$,  is the block-sparsity level of $\x$. Signals of block-sparsity level of at most $k$ are called $k-$block-sparse signals. If $S \subset \{1,\cdots,p\}$ is an index set, then $|S|$ is the cardinality of $S$, and $\x_S \in \R^{n|S|}$ is the vector formed by the blocks of $\x$ with indices in $S$.

We use $\e_i$, $\zero$, $\bs O$, $\bone$, and $\I_N$ to denote respectively the $i$th canonical basis vector, the zero column vector, the zero matrix, the column vector with all ones, and the identity matrix of size $N \times N$.

Suppose $\x$ is a $k-$block-sparse signal. In this paper, we observe $\x$ through the following linear model:
\begin{eqnarray}\label{eqn:model}
  \y &=& A\x + \w,
\end{eqnarray}
where $A \in \R^{m\times np}$ is the measurement/sensing matrix, $\y$ is the measurement vector, and $\w$ is noise. A very special block-sparse model is when $\x$ is a complex signal, such as the models in sensor arrays and radar applications. Note that the computable performance analysis developed in \cite{tang2011cmsv, tang2011linf} for the real variables can not apply to the complex case.

Many algorithms in sparse signal recovery have been extended to recover the block-sparse signal $\x$ from $\y$ by exploiting the block-sparsity of $\x$. We focus on three algorithms based on block-$\ell_1$ minimization: the Block-Sparse Basis Pursuit (BS-BP) \cite{Eldar2009BlockSparse}, the Block-Sparse Dantzig selector (BS-DS) \cite{zhang2010groupdantzig}, and the Block-Sparse LASSO estimator (BS-LASSO) \cite{Yuan2006Grouplasso}.
\begin{eqnarray}
\hskip -1cm &&\text{BS-BP:}\min_{\z \in \R^{np}}\|\z\|_{\b 1} \text{\ \ s.t.\ } \|\y - A\z\|_2 \leq \varepsilon\label{bp}\\
\hskip -1cm && \text{BS-DS:} \min_{\z \in \R^{np}}\|\z\|_{\b 1} \text{\ \ s.t. \ } \|A^T(\y - A\z)\|_{\b \infty} \leq \mu\label{ds}\\
\hskip -1cm &&\text{BS-LASSO:} \min_{\z \in \R^{np}} \frac{1}{2}\|\y - A\z\|_2^2 + \mu \|\z\|_{\b 1} \label{lasso}.
\end{eqnarray}
Here $\mu$ is a tuning parameter, and $\varepsilon$ is a measure of the noise level. All three optimization problems have efficient implementations using convex programming.

In the noise-free case where $\w = 0$, roughly speaking, all three algorithms reduce to
\begin{eqnarray}\label{eqn:l1relaxation}
\min_{\z \in \R^{np}}\|\z\|_{\b 1} \text{\ \ s.t.\ } A\z = A\x,
\end{eqnarray}
which is the block-$\ell_1$ relaxation of the block-$\ell_0$ problem:
\begin{eqnarray}
  \min_{\z \in \R^n}\|\z\|_{\b 0} \text{\ \ s.t.\ } A\z = A\x.
\end{eqnarray}
A minimal requirement on the block-$\ell_1$ minimization algorithms is the \emph{uniqueness and exactness} of the solution $\hx \df \mathrm{argmin}_{\z: A\z = A\x} \|\x\|_{\b 1}$, \emph{i.e.}, $\hx = \x$. When the true signal $\x$ is $k-$block-sparse, the sufficient and necessary condition for exact block-$\ell_1$ recovery is \cite{Stojnic2009BS}
\begin{eqnarray}\label{nullspaceproperty}
  \sum_{i\in S}\|\z_i\|_2 < \sum_{i \notin S} \|\z_i\|_2, \forall \z \in \ker(A), |S| \leq k,
\end{eqnarray}
where $\ker(A) \df \{\z: A\z = 0\}$ is the kernel of $A$, $S \subset \{1,\ldots,p\}$ is an index set, and $\z_i$ is the $i$th block of $\z$ of size $n$.

\section{Performance Bounds on the Block-$\ell_\infty$ Norms of the Recovery Errors}\label{sec:bounds}
In this section, we derive performance bounds on the block-$\ell_\infty$ norms of the error vectors. We first establish a proposition characterizing the error vectors of the block-$\ell_1$ recovery algorithms, whose proof is given in Appendix \ref{app:pf:errorcharacteristics}.

\begin{proposition}\label{pro:errorcharacteristics}
Suppose $\x \in \R^{np}$ in \eqref{eqn:model} is $k-$block-sparse and the noise $\w$ satisfies $\|\w\|_2 \leq \varepsilon$, $\|A^T\w\|_{\b\infty} \leq \mu$, and $\|A^T\w\|_\infty \leq \kappa \mu, \kappa \in (0,1)$, for the BS-BP, the BS-DS, and the BS-LASSO, respectively. Define $\h = \hx - \x$ as the error vector for any of the three block-$\ell_1$ recovery algorithms \eqref{bp}, \eqref{ds}, and \eqref{lasso}. Then we have
\begin{eqnarray}\label{eqn:errorcharacteristics}
\|\h_S\|_{\b 1} \geq  \|\h\|_{\b 1}/c,
\end{eqnarray}
where $S = \mathrm{bsupp}(\x)$, $ c = 2$ for the BS-BP and the BS-DS, and $c = 2/(1-\kappa)$ for the BS-LASSO.
\end{proposition}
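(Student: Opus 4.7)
The desired inequality \eqref{eqn:errorcharacteristics} is equivalent to the cone condition $\|\h_{S^c}\|_{\b 1}\le(c-1)\|\h_S\|_{\b 1}$, since $\|\h\|_{\b 1}=\|\h_S\|_{\b 1}+\|\h_{S^c}\|_{\b 1}$. My plan is to establish this cone condition by the standard recipe for $\ell_1$-type estimators, adapted to block norms: (i) write down a feasibility/optimality inequality for $\hx$ against the truth $\x$; (ii) decompose $\|\hx\|_{\b 1}=\|\x_S+\h_S\|_{\b 1}+\|\h_{S^c}\|_{\b 1}$ using that $\x$ is supported on $S$; (iii) apply a blockwise reverse triangle inequality on $S$ to get $\|\x_S+\h_S\|_{\b 1}\ge\|\x\|_{\b 1}-\|\h_S\|_{\b 1}$; and, for BS-LASSO, (iv) absorb a residual noise cross-term via the duality between the block-$\ell_1$ and block-$\ell_\infty$ norms.

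For BS-BP and BS-DS the noise hypotheses are precisely what is needed to make $\x$ feasible ($\|\y-A\x\|_2=\|\w\|_2\le\varepsilon$ and $\|A^T(\y-A\x)\|_{\b\infty}=\|A^T\w\|_{\b\infty}\le\mu$), so optimality of $\hx$ reduces step (i) to $\|\hx\|_{\b 1}\le\|\x\|_{\b 1}$. Combining with (ii)--(iii) immediately yields $\|\h_{S^c}\|_{\b 1}\le\|\h_S\|_{\b 1}$, giving $c=2$.

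For BS-LASSO, step (i) is the inequality obtained by plugging $\x$ into the LASSO objective; after expanding the quadratic and cancelling $\|\w\|_2^2$ it becomes
\[
\tfrac12\|A\h\|_2^2-\langle A^T\w,\h\rangle+\mu\|\hx\|_{\b 1}\le\mu\|\x\|_{\b 1}.
\]
Dropping the nonnegative quadratic and applying (ii)--(iii) leads to $\mu\bigl(\|\h_{S^c}\|_{\b 1}-\|\h_S\|_{\b 1}\bigr)\le\langle A^T\w,\h\rangle$. Step (iv) then bounds the right-hand side by the dual-norm inequality $|\langle A^T\w,\h\rangle|\le\|A^T\w\|_{\b\infty}\|\h\|_{\b 1}\le\kappa\mu\bigl(\|\h_S\|_{\b 1}+\|\h_{S^c}\|_{\b 1}\bigr)$, and a short rearrangement produces $\|\h_{S^c}\|_{\b 1}\le\frac{1+\kappa}{1-\kappa}\|\h_S\|_{\b 1}$, i.e. $c=2/(1-\kappa)$.

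The only step requiring any care is (iv): one must invoke the correct dual pairing between the block-$\ell_1$ and block-$\ell_\infty$ norms (a blockwise Cauchy--Schwarz), and the hypothesis $\kappa<1$ is exactly what allows the $\|\h_{S^c}\|_{\b 1}$ contribution to be moved to the left-hand side with a positive coefficient. The BS-BP and BS-DS cases are one-line consequences of feasibility of the truth and the blockwise triangle inequality.
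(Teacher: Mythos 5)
Your proposal is correct and follows essentially the same route as the paper's proof: feasibility of the truth plus the blockwise reverse triangle inequality for BS-BP/BS-DS, and for BS-LASSO the optimality inequality with the quadratic term dropped and the cross-term controlled by the block-$\ell_1$/block-$\ell_\infty$ dual pairing under $\|A^T\w\|_{\b\infty}\le\kappa\mu$. The only difference is the (immaterial) order in which you apply the support decomposition and the dual-norm bound; the resulting cone constant $c=2/(1-\kappa)$ matches the paper's.
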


An immediate corollary of Proposition \ref{pro:errorcharacteristics} is to bound the block-$\ell_1$ and $\ell_2$ norms of the error vectors, using the block-$\ell_\infty$ norm. The proof is given in Appendix \ref{app:pf:connections}.
\begin{corollary}\label{cor:connections}
Under the assumptiocorollaryns of Proposition \ref{pro:errorcharacteristics}, we have
\begin{eqnarray}
  \|\h\|_{\b 1} & \leq & ck \|\h\|_{\b \infty}, \text{\ and \ }\label{eqn:l1linf}\\
  \|\h\|_2 & \leq & \sqrt{ck} \|\h\|_{\b \infty}.\label{eqn:l2linf}
\end{eqnarray}
Furthermore, if $S = \bsupp(\x)$ and $\beta = \min_{i\in S}\|\x_i\|_2$, then $\|\h\|_{\b \infty} < \beta/2$ implies
\begin{eqnarray}
  \{i: \|\hx_i\|_2 > \beta/2\} &=& \bsupp(\x),
\end{eqnarray}
\emph{i.e.}, a thresholding operator recovers the signal block-support.
\end{corollary}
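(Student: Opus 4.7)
The plan is to read off all three claims directly from Proposition \ref{pro:errorcharacteristics} together with elementary block-norm inequalities, essentially without any further analytic work. The main conceptual ingredient I would isolate first is a Hölder-type bound at the block level: for any $\z \in \R^{np}$,
\begin{equation*}
\|\z\|_2^2 \;=\; \sum_{i=1}^p \|\z_i\|_2^2 \;\leq\; \bigl(\max_i \|\z_i\|_2\bigr)\sum_i \|\z_i\|_2 \;=\; \|\z\|_{\b\infty}\,\|\z\|_{\b 1},
\end{equation*}
which is the block-analogue of $\|\z\|_2^2 \leq \|\z\|_\infty\|\z\|_1$ in the ordinary vector case.

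For \eqref{eqn:l1linf}, I would start from $\|\h\|_{\b 1} \leq c\|\h_S\|_{\b 1}$, which is just a rearrangement of \eqref{eqn:errorcharacteristics}. Since $|S| = |\bsupp(\x)| \leq k$, the definition of $\|\cdot\|_{\b 1}$ and $\|\cdot\|_{\b\infty}$ gives $\|\h_S\|_{\b 1} = \sum_{i\in S}\|\h_i\|_2 \leq k\|\h\|_{\b\infty}$, and the bound follows by composition. For \eqref{eqn:l2linf}, I would apply the Hölder-type inequality above to $\z = \h$ and then plug in \eqref{eqn:l1linf} for $\|\h\|_{\b 1}$, yielding $\|\h\|_2^2 \leq ck\,\|\h\|_{\b\infty}^2$, and take square roots.

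The support-recovery claim I would handle by a two-case argument on the block index $i$. The hypothesis $\|\h\|_{\b\infty} < \beta/2$ means $\|\h_i\|_2 < \beta/2$ for every $i$. If $i \in S$, then $\|\x_i\|_2 \geq \beta$ and the reverse triangle inequality gives
\begin{equation*}
\|\hx_i\|_2 \;=\; \|\x_i + \h_i\|_2 \;\geq\; \|\x_i\|_2 - \|\h_i\|_2 \;>\; \beta - \beta/2 \;=\; \beta/2;
\end{equation*}
if $i \notin S$, then $\x_i = \zero$ and $\|\hx_i\|_2 = \|\h_i\|_2 < \beta/2$. Together these two cases identify $\{i:\|\hx_i\|_2 > \beta/2\}$ with $S = \bsupp(\x)$.

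There is no real obstacle to overcome here: all three parts reduce to algebraic manipulation once Proposition \ref{pro:errorcharacteristics} is in hand. The only care needed is to state the block-level Hölder inequality cleanly so that \eqref{eqn:l2linf} does not look like a sleight of hand, and to make sure in the support-recovery part that the strict inequality $\|\h\|_{\b\infty} < \beta/2$ (rather than $\leq$) is used on both sides so that the threshold $\beta/2$ separates the two cases without ambiguity.
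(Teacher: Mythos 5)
Your proposal is correct and follows essentially the same route as the paper's proof: the same chain $\|\h\|_{\b 1}\leq c\|\h_S\|_{\b 1}\leq ck\|\h\|_{\b\infty}$, the same two-case triangle-inequality argument for support recovery, and your block-H\"older inequality $\|\z\|_2^2\leq\|\z\|_{\b\infty}\|\z\|_{\b 1}$ is exactly the paper's step $\sum_i(\|\h_i\|_2/\|\h\|_{\b\infty})^2\leq\sum_i\|\h_i\|_2/\|\h\|_{\b\infty}$ written in unnormalized form. Your remark about needing the strict inequality $\|\h\|_{\b\infty}<\beta/2$ is well taken, since the paper's own proof momentarily writes $\leq\beta/2$ while still concluding a strict bound.
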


For ease of presentation, we introduce the following notation:
\begin{definition}\label{def:linfcmsv}
For any $s \in [1,p]$ and matrix $A\in \R^{m\times np}$, define
\begin{eqnarray}
  \omega_{\diamond}(Q,s) &=& \min_{\z: \|\z\|_{\b 1}/\|\z\|_{\b \infty} \leq s} \frac{\|Q\z\|_\diamond}{\|\z\|_{\b \infty}},
\end{eqnarray}
where $Q$ is either $A$ or $A^TA$.
\end{definition}

Now we present the error bounds on the block-$\ell_{\infty}$ norm of the error vectors for the BS-BP, the BS-DS, and the BS-LASSO, whose proof is given in Appendix \ref{app:pf:errorbound}.
\begin{theorem}\label{thm:errorbound}
Under the assumption of Proposition \ref{pro:errorcharacteristics}, we have
\begin{eqnarray}
  \|\hx - \x\|_{\b \infty}\leq \frac{2\varepsilon}{\omega_2(A,2k)}
\end{eqnarray}
for the BS-BP,
\begin{eqnarray}
  \|\hx-\x\|_{\b \infty}\leq \frac{2\mu}{\omega_{\b\infty}(A^TA,2k)}
\end{eqnarray}
for the BS-DS, and
\begin{eqnarray}
  \|\hx-\x\|_{\b \infty}\leq \frac{(1+\kappa)\mu}{\omega_{\b\infty}(A^TA,2k/(1-\kappa))}
\end{eqnarray}
for the BS-LASSO.
\end{theorem}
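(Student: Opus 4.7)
The plan is to derive the bounds by combining Proposition~\ref{pro:errorcharacteristics} (which controls how the error concentrates on the true block support) with the definition of $\omega_\diamond(Q,s)$ (which serves as a restricted isometry-like modulus), and then matching these with upper bounds on $\|Qh\|_\diamond$ obtained from the feasibility/optimality conditions of each algorithm.

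First I would establish the common structural inequality $\|h\|_{\b 1}/\|h\|_{\b \infty}\le ck$. Starting from Proposition~\ref{pro:errorcharacteristics}, $\|h_S\|_{\b 1}\ge \|h\|_{\b 1}/c$ with $|S|\le k$, and the trivial bound $\|h_S\|_{\b 1}\le |S|\cdot \|h\|_{\b \infty}\le k\|h\|_{\b\infty}$ immediately yields $\|h\|_{\b 1}\le ck\|h\|_{\b\infty}$. Plugging $\z=h$ into the feasible set in Definition~\ref{def:linfcmsv} (with $s=ck$) gives
\begin{equation}
\|Qh\|_\diamond \;\ge\; \omega_\diamond(Q, ck)\,\|h\|_{\b\infty},
\end{equation}
so that $\|h\|_{\b\infty}\le \|Qh\|_\diamond/\omega_\diamond(Q,ck)$ for the appropriate choice of $Q$ and $\diamond$ in each case.

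Next I would produce an upper bound on $\|Qh\|_\diamond$ for each of the three algorithms. For the BS-BP, take $Q=A$, $\diamond=\|\cdot\|_2$, and $c=2$; since $A h = (A\hx - \y) + \w$ with $\|A\hx-\y\|_2\le\varepsilon$ by feasibility and $\|\w\|_2\le\varepsilon$ by hypothesis, the triangle inequality gives $\|Ah\|_2\le 2\varepsilon$. For the BS-DS, take $Q=A^TA$, $\diamond=\|\cdot\|_{\b\infty}$, and $c=2$; writing $A^TAh = A^T(A\hx-\y) + A^T\w$, the feasibility constraint in \eqref{ds} and the noise assumption each contribute $\mu$ in block-$\ell_\infty$, so $\|A^TAh\|_{\b\infty}\le 2\mu$. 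Combining with the preceding step yields the first two bounds.

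The main obstacle is the BS-LASSO, since it has no explicit feasibility constraint to use. Here I would invoke the first-order optimality condition: $\hx$ minimizes \eqref{lasso} iff $A^T(\y - A\hx)\in \mu\,\partial\|\hx\|_{\b 1}$. Any subgradient of the block-$\ell_1$ norm is bounded by $1$ in block-$\ell_\infty$ norm (each block is either the unit vector $\hx_i/\|\hx_i\|_2$ or an element of the unit $\ell_2$-ball), which gives $\|A^T(A\hx-\y)\|_{\b\infty}\le \mu$. Combining with $\|A^T\w\|_{\b\infty}\le \kappa\mu$ and the triangle inequality yields $\|A^TAh\|_{\b\infty}\le (1+\kappa)\mu$. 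With $c=2/(1-\kappa)$, so that $ck = 2k/(1-\kappa)$, the general inequality from the first step delivers the claimed bound. All three cases are then immediate.
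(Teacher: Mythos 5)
Your proof is correct and follows essentially the same route as the paper: bound $\|Q\h\|_\diamond$ via feasibility (BS-BP, BS-DS) or the first-order optimality condition (BS-LASSO), combine with $\|\h\|_{\b 1}\le ck\|\h\|_{\b\infty}$ from Proposition \ref{pro:errorcharacteristics}, and invoke Definition \ref{def:linfcmsv}. The only cosmetic difference is that you re-derive the inequality $\|\h\|_{\b 1}\le ck\|\h\|_{\b\infty}$ inline, whereas the paper cites it as equation \eqref{eqn:l1linf} of Corollary \ref{cor:connections}.
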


A consequence of Theorem \ref{thm:errorbound} and Corollary \ref{cor:connections} is the error bound on the $\ell_2$ norm:
\begin{corollary}\label{cor:l2errorbound}
Under the assumption of Proposition \ref{pro:errorcharacteristics}, the $\ell_2$ norms of the recovery errors are bounded as
\begin{eqnarray}
  \|\hx - \x\|_2\leq \frac{2\sqrt{2k}\varepsilon}{\omega_2(A,2k)}
\end{eqnarray}
for the BS-BP,
\begin{eqnarray}
  \|\hx-\x\|_2\leq \frac{2\sqrt{2k}\mu}{\omega_{\b\infty}(A^TA,2k)}
\end{eqnarray}
for the BS-DS, and
\begin{eqnarray}
  \|\hx-\x\|_2\leq \sqrt{\frac{2k}{1-\kappa}}\frac{(1+\kappa)\mu}{\omega_{\b\infty}(A^TA,2k/(1-\kappa))}
\end{eqnarray}
for the BS-LASSO.
\end{corollary}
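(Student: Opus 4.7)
The plan is to obtain Corollary 2 by directly chaining the two ingredients already at our disposal: inequality \eqref{eqn:l2linf} from Corollary 1, which bounds $\|\h\|_2$ by $\sqrt{ck}\,\|\h\|_{\b\infty}$, and the three block-$\ell_\infty$ error estimates in Theorem 1. Nothing new about the geometry of the recovery problem is needed; everything is already encoded in Proposition 1 and its consequences.

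Concretely, I would fix an algorithm, note that the hypotheses of Proposition 1 are in force so that both Theorem 1 and Corollary 1 apply, and then multiply \eqref{eqn:l2linf} by the corresponding bound from Theorem 1. For the BS-BP and BS-DS, the constant from Proposition 1 is $c=2$, so $\sqrt{ck}=\sqrt{2k}$; substituting the bounds $2\varepsilon/\omega_2(A,2k)$ and $2\mu/\omega_{\b\infty}(A^T A,2k)$ respectively yields the claimed factors $2\sqrt{2k}\,\varepsilon/\omega_2(A,2k)$ and $2\sqrt{2k}\,\mu/\omega_{\b\infty}(A^T A,2k)$. For the BS-LASSO, the relevant constant is $c=2/(1-\kappa)$, so $\sqrt{ck}=\sqrt{2k/(1-\kappa)}$, and combining with the bound $(1+\kappa)\mu/\omega_{\b\infty}(A^T A,2k/(1-\kappa))$ from Theorem 1 produces exactly the stated estimate.

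Since the whole argument is a one-line substitution per case, there is no real obstacle: the only thing to be careful about is to use the correct value of $c$ for each algorithm (in particular, to keep the sparsity argument of the goodness measure $\omega_{\b\infty}(A^T A,\cdot)$ for the BS-LASSO as $2k/(1-\kappa)$ rather than $2k$, matching Theorem 1). A brief verification that each of the three estimates from Proposition 1 on $\|\h_S\|_{\b 1}$ indeed feeds Corollary 1 with the stated constant $c$ closes the argument, and the corollary follows.
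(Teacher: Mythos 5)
Your proposal is correct and is exactly the paper's argument: the paper derives Corollary \ref{cor:l2errorbound} by combining the block-$\ell_\infty$ bounds of Theorem \ref{thm:errorbound} with inequality \eqref{eqn:l2linf} of Corollary \ref{cor:connections}, using $c=2$ for the BS-BP and BS-DS and $c=2/(1-\kappa)$ for the BS-LASSO. The per-case substitutions, including keeping the argument $2k/(1-\kappa)$ in $\omega_{\b\infty}(A^TA,\cdot)$ for the BS-LASSO, all check out.
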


One of the primary contributions of this work is the design of algorithms that compute $\omega_\diamond(A,s)$ and $\omega_{\b \infty}(A^TA,s)$ efficiently. The algorithms provide a way to numerically assess the performance of the BS-BP, the BS-DS, and the BS-LASSO according to the bounds given in Theorem \ref{thm:errorbound} and Corollary \ref{cor:l2errorbound}. According to Corollary \ref{cor:connections}, the correct recovery of signal block-support is also guaranteed by reducing the block-$\ell_\infty$ norm to some threshold. In Section \ref{sec:random}, we also demonstrate that the bounds in Theorem \ref{thm:errorbound} are non-trivial for a large class of random sensing matrices, as long as $m$ is relatively large. Numerical simulations in Section \ref{sec:numerical} show that in many cases the error bounds on the $\ell_2$ norms based on Corollary \ref{cor:l2errorbound} are tighter than the block RIP based bounds. Before we turn to the computation issues, we first establish some results on the probability behavior of $\omega_\diamond(Q,s)$.

\section{Probabilistic Behavior of $\omega_\diamond(Q,s)$}\label{sec:random}
In this section, we analyze how good are the performance bounds in Theorem \ref{thm:errorbound} for random sensing matrices. For this purpose, we define the block $\ell_1$-constrained minimal singular value (block $\ell_1$-CMSV), which is an extension of the $\ell_1$-CMSV concept in the sparse setting \cite{tang2011cmsv}:
\begin{definition}\label{def:l1cmsv}
For any $s \in [1,p]$ and matrix $A\in \R^{m\times np}$, define the block $\ell_1$-constrained minimal singular value (abbreviated as block $\ell_1$-CMSV) of $A$ by
\begin{eqnarray}
\rho_s(A) = \min_{\z:\ {\|\z\|_{\b 1}^2}/{\|\z\|_2^2} \leq s} \frac{\|A\z\|_2}{\|\z\|_2}.
\end{eqnarray}
\end{definition}

The most important difference between $\rho_s(A)$ and $\omega_\diamond(Q,s)$ is the replacement of $\|\cdot\|_{\b \infty}$ with $\|\cdot\|_2$ in the denominators of the fractional constraint and the objective function. The Euclidean norm $\|\cdot\|_2$ is more amenable to probabilistic analysis. The connections between $\rho_s(A)$, $\omega_2(A,s)$, and $\omega_{\b\infty}(A^TA,s)$, established in the following lemma, allow us to analyze the probabilistic behaviors of $\omega_\diamond(Q,s)$ using the results for $\rho_s(A)$ which we are going to establish later.
\begin{lemma}\label{lm:omega_rho}
\begin{eqnarray}
   \sqrt{s}\sqrt{\omega_{\b \infty}(A^TA,s)} \geq \omega_2(A,s) \geq \rho_{s^2}(A).
\end{eqnarray}
\end{lemma}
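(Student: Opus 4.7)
The lemma packages two inequalities, one on each side, and I would handle them independently using only elementary norm manipulations.

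For the right inequality $\omega_2(A,s) \geq \rho_{s^2}(A)$, the plan is to show that every $\z$ feasible for the program defining $\omega_2(A,s)$ is feasible for the one defining $\rho_{s^2}(A)$, and that moving from the denominator $\|\z\|_{\b\infty}$ to $\|\z\|_2$ only decreases the ratio. Concretely, I would start from the constraint $\|\z\|_{\b 1}/\|\z\|_{\b\infty} \leq s$ and invoke the pointwise bound $\|\z\|_{\b\infty} \leq \|\z\|_2$ (immediate from $\|\z\|_2^2 = \sum_i \|\z_i\|_2^2$) to conclude $\|\z\|_{\b 1}^2/\|\z\|_2^2 \leq s^2$; the same bound gives $\|A\z\|_2/\|\z\|_{\b\infty} \geq \|A\z\|_2/\|\z\|_2$, and taking minima yields the claim.

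For the left inequality $\sqrt{s}\sqrt{\omega_{\b\infty}(A^TA,s)} \geq \omega_2(A,s)$, the key tool is the block version of Hölder's inequality, namely
\begin{equation*}
\langle \z, \y\rangle \;=\; \sum_i \langle \z_i, \y_i\rangle \;\leq\; \sum_i \|\z_i\|_2\|\y_i\|_2 \;\leq\; \|\z\|_{\b 1}\|\y\|_{\b\infty},
\end{equation*}
which follows from Cauchy--Schwarz applied block by block and the definitions of the block norms. Applied with $\y = A^TA\z$, this gives $\|A\z\|_2^2 = \langle \z, A^TA\z\rangle \leq \|\z\|_{\b 1}\|A^TA\z\|_{\b\infty}$. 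For any $\z$ feasible for the $\omega_{\b\infty}(A^TA,s)$ program, the constraint $\|\z\|_{\b 1}\leq s\|\z\|_{\b\infty}$ yields $\|A\z\|_2^2/\|\z\|_{\b\infty}^2 \leq s\cdot \|A^TA\z\|_{\b\infty}/\|\z\|_{\b\infty}$.

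To finish, I would evaluate this inequality at a minimizer $\z^*$ of the $\omega_{\b\infty}(A^TA,s)$ program; since $\z^*$ is also feasible for the $\omega_2(A,s)$ program (the feasible sets coincide), the left side is at least $\omega_2(A,s)^2$, while the right side equals $s\,\omega_{\b\infty}(A^TA,s)$. Taking square roots completes the argument. I do not expect any genuine obstacle: both inequalities reduce to the block $\b 1/\b\infty$ duality together with the trivial bound $\|\z\|_{\b\infty}\leq\|\z\|_2$; the only point worth stating carefully is that the feasible set depends on $\z$ through the ratio $\|\z\|_{\b 1}/\|\z\|_{\b\infty}$ rather than through $\|\z\|_{\b 1}$ alone, which is why the bound on $\|\z\|_{\b 1}^2/\|\z\|_2^2$ is $s^2$ and not $s$.
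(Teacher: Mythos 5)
Your proof is correct and follows essentially the same route as the paper: the right inequality via the pointwise bound $\|\z\|_{\b\infty}\leq\|\z\|_2$ together with the resulting inclusion of feasible sets, and the left inequality via the block H\"older bound $\|A\z\|_2^2=\langle\z,A^TA\z\rangle\leq\|\z\|_{\b 1}\|A^TA\z\|_{\b\infty}\leq s\|\z\|_{\b\infty}\|A^TA\z\|_{\b\infty}$. Evaluating at a minimizer of the $\omega_{\b\infty}$ program rather than taking minima of both sides is only a cosmetic difference; the argument is the same.
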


For a proof, see Appendix \ref{app:pf:lm_omega_rho}.

Next we derive a condition on the number of measurements needed to get $\rho_s(A)$ bounded away from zero with high probability for sensing matrices with \emph{i.i.d.} subgaussian and isotropic rows. Note that a random vector $\X \in \R^{np}$ is called \emph{isotropic and subgaussian} with constant $L$ if $\E|\left<\X, \u\right>|^2 = \|\u\|_2^2$ and $\Pr(|\left<\X, \u\right>| \geq t ) \leq 2 \exp(-t^2/(L \|\u\|_2))$ hold for any $\u \in \R^{np}$.

\begin{theorem}\label{thm:randomcmsv}
Let the rows of $\sqrt{m}A$ be \emph{i.i.d.} subgaussian and isotropic random vectors with numerical constant $L$. Then there exist constants $c_1$ and $c_2$ such that for any $\epsilon > 0$ and $m \geq 1$ satisfying
\begin{eqnarray}
  m \geq c_1 \frac{L^2(sn + s \log p)}{\epsilon^2},
\end{eqnarray}
we have
\begin{eqnarray}
  \E \rho_s(A) \geq 1- \epsilon,
\end{eqnarray}
and
\begin{eqnarray}
  \Pr\{\rho_s(A) \geq 1 - \epsilon\} \geq  1 - \exp(-c_2 \epsilon^2m/L^4).
\end{eqnarray}
\end{theorem}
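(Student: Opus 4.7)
The plan is to recognize
\[
\rho_s(A) \;=\; \inf_{\z \in T_s} \|A\z\|_2, \qquad T_s \;:=\; \bigl\{\z \in \R^{np}:\, \|\z\|_{\b 1}^2 \leq s\, \|\z\|_2^2\bigr\} \cap S^{np-1},
\]
and then invoke a Mendelson--Pajor--Tomczak-Jaegermann style uniform deviation inequality for isotropic subgaussian empirical processes. Since isotropy gives $\E \|A\z\|_2^2 = \|\z\|_2^2 = 1$ for every $\z \in S^{np-1}$, the task reduces to controlling $\sup_{\z \in T_s}\bigl|\|A\z\|_2^2 - 1\bigr|$ in terms of the geometric complexity of $T_s$.

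The first step is a Gaussian width computation. Every $\z \in T_s$ satisfies $\|\z\|_{\b 1} \leq \sqrt{s}$, so with $\g$ a standard Gaussian vector in $\R^{np}$ and $\g_i$ its $i$th block, duality between the block-$\ell_1$ and block-$\ell_\infty$ norms gives
\[
w(T_s) \;=\; \E \sup_{\z \in T_s} \langle \g, \z\rangle \;\leq\; \sqrt{s}\; \E\, \max_{1\leq i \leq p} \|\g_i\|_2.
\]
Because $\|\g_i\|_2$ concentrates around $\sqrt{n}$ with Gaussian-type tails, a union bound over the $p$ blocks yields $\E \max_i \|\g_i\|_2 \leq C\bigl(\sqrt{n} + \sqrt{\log p}\bigr)$, and therefore $w(T_s) \leq C'\sqrt{sn + s\log p}$.

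The second step applies the subgaussian empirical process deviation inequality: with probability at least $1 - \exp(-c_2 t^2)$,
\[
\sup_{\z \in T_s}\bigl|\|A\z\|_2^2 - 1\bigr| \;\leq\; C L^2 \Bigl(\frac{w(T_s)}{\sqrt{m}} + \frac{w(T_s)^2}{m}\Bigr) + \frac{t L^2}{\sqrt{m}}.
\]
Setting $t = \epsilon \sqrt{m}/L^2$ and inserting the width bound, the right-hand side is at most $\epsilon$ provided $m \geq c_1 L^2 (sn + s\log p)/\epsilon^2$; on that event, $\|A\z\|_2 \geq \sqrt{1-\epsilon} \geq 1 - \epsilon$ uniformly over $T_s$, which is exactly $\rho_s(A) \geq 1 - \epsilon$, and the failure probability is $\exp(-c_2 \epsilon^2 m/L^4)$. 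The expectation bound $\E \rho_s(A) \geq 1-\epsilon$ then follows by integrating this tail together with the trivial lower bound $\rho_s(A) \geq 0$, after a harmless adjustment of constants.

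The main obstacle I expect is pinning down the precise off-the-shelf form of the subgaussian empirical process inequality with exactly the $L^2$ dependence in the sample-size condition and the $L^4$ dependence in the deviation exponent demanded by the theorem; once the right tool (for instance, the Klartag--Mendelson inequality or a generic chaining bound via Talagrand's $\gamma_2$ functional) is cited, the remainder is the standard block-sparse Gaussian width calculation enabled by block-$\ell_1$/block-$\ell_\infty$ duality, which parallels the complexity estimates used in probabilistic proofs of the block RIP.
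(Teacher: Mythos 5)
Your proposal follows essentially the same route as the paper: both reduce $\rho_s(A)$ to a uniform deviation bound for an isotropic subgaussian empirical process over the set $\{\z: \|\z\|_2=1,\ \|\z\|_{\b 1}^2\le s\}$, both control the complexity of that set by the block-$\ell_1$/block-$\ell_\infty$ duality $w(T_s)\le\sqrt{s}\,\E\max_i\|\g_i\|_2\le C\sqrt{s}(\sqrt{n}+\sqrt{\log p})$, and both then read off the sample-size condition and the $\exp(-c_2\epsilon^2 m/L^4)$ tail from the Mendelson--Pajor--Tomczak-Jaegermann inequality. The only cosmetic difference is that the paper bounds $\E\max_i\|\g_i\|_2$ via a Slepian comparison rather than your concentration-plus-union-bound argument, and it gets the expectation statement directly from the expectation form of the cited inequality rather than by integrating the tail; both variants are valid.
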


For a proof, see Appendix \ref{app:pf:randomcmsv}.

Using $\rho_s(A)$, we could equally develop bounds similar to those of Theorem \ref{thm:errorbound} on the $\ell_2$ norm of the error vectors. For example, the error bound for the BS-BP would look like
\begin{eqnarray}
  \|\hx - \x\|_2 & \leq & \frac{2\varepsilon}{\rho_{2k}(A)}.
\end{eqnarray}
The conclusion of Theorem \ref{thm:randomcmsv}, combined with the previous equation, implies that we could stably recover a block sparse signal using BS-BP with high probability if the sensing matrix is subgaussian and isotropic and $m \geq c (kn + k\log p)/\epsilon^2$. If we do not consider the block structure in the signal, we would need $m  \geq c (kn\log (np))/\epsilon^2$ measurements, as the sparsity level is $kn$ \cite{tang2011cmsv}. Therefore, the prior information regarding the block structure greatly reduces the number of measurements necessary to recover the signal. The lower bound on $m$ is essentially the same as the one given by the block RIP (See \cite[Proposition 4]{Eldar2009BlockSparse}).

\begin{theorem}\label{thm:randomomega}
Under the assumptions and notations of Theorem \ref{thm:randomcmsv}, there exist constants $c_1$ and $c_2$ such that for any $\epsilon > 0$ and $m \geq 1$ satisfying
\begin{eqnarray}\label{eqn:mrandombd}
  m \geq c_1 \frac{L^2(s^2n + s^2 \log p)}{\epsilon^2},
\end{eqnarray}
we have
\begin{eqnarray}
  &&\E\ \omega_2(A,s) \geq 1 - \epsilon, \text{\ and \ }\\
  &&\Pr\{\omega_2(A,s) \geq 1 - \epsilon\} \geq  1 - \exp(-c_2 \epsilon^2m/L^4),
\end{eqnarray}
and
\begin{eqnarray}
  \hskip -1cm &&\E {\ \omega_{\b \infty}(A^TA,s)} \geq \frac{(1-\epsilon)^2}{s}, \text{\ and \ }\label{eqn:meanomega_inf}\\
  \hskip -1cm &&\Pr\left\{\omega_{\b \infty}(A^TA,s) \geq \frac{(1 - \epsilon)^2}{s}\right\} \geq  1 - \exp(-c_2 \epsilon^2m/L^4)\label{eqn:probomega_inf}.
\end{eqnarray}
\end{theorem}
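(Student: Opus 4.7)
The plan is to obtain Theorem \ref{thm:randomomega} as a direct consequence of Theorem \ref{thm:randomcmsv} and Lemma \ref{lm:omega_rho}, since the latter gives us a deterministic chain
\begin{eqnarray*}
\sqrt{s}\sqrt{\omega_{\b\infty}(A^TA,s)} \;\geq\; \omega_2(A,s) \;\geq\; \rho_{s^2}(A)
\end{eqnarray*}
that transfers any lower bound on the block $\ell_1$-CMSV at level $s^2$ to the two goodness measures at level $s$.

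The first step is to apply Theorem \ref{thm:randomcmsv} with the sparsity parameter $s$ replaced by $s^2$. This is legitimate because nothing in the hypotheses of that theorem restricts $s$ to be an integer other than through the bound $s \in [1,p]$; and the conclusion is in terms of $s n$ and $s \log p$, which after the substitution become $s^2 n$ and $s^2 \log p$, matching the sample-size requirement \eqref{eqn:mrandombd}. So under the stated condition on $m$, we obtain both $\E \rho_{s^2}(A) \geq 1-\epsilon$ and the subgaussian tail bound $\Pr\{\rho_{s^2}(A) \geq 1-\epsilon\} \geq 1 - \exp(-c_2 \epsilon^2 m / L^4)$.

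The second step is to chain these through Lemma \ref{lm:omega_rho}. On the event $\{\rho_{s^2}(A) \geq 1-\epsilon\}$, the deterministic inequality immediately gives $\omega_2(A,s) \geq 1-\epsilon$ and $\omega_{\b\infty}(A^TA,s) \geq (1-\epsilon)^2/s$; this yields the two probability statements at once. For the expectation bounds, I would use monotonicity of the expectation applied to the pointwise inequalities $\omega_2(A,s) \geq \rho_{s^2}(A)$ and $\omega_{\b\infty}(A^TA,s) \geq \omega_2(A,s)^2/s$; the first gives $\E\, \omega_2(A,s) \geq \E\, \rho_{s^2}(A) \geq 1-\epsilon$ directly, while for the second I would note that Theorem \ref{thm:randomcmsv} actually provides the same lower bound with $1-\epsilon$ replaced by $1-\epsilon'$ for every $\epsilon' \in (0,\epsilon]$ at the same sample size (after adjusting constants), or alternatively apply Jensen's inequality to the convex function $t \mapsto t^2$ together with the tail bound on $\omega_2(A,s)$ to obtain $\E\, \omega_2(A,s)^2 \geq (1-\epsilon)^2$, yielding \eqref{eqn:meanomega_inf}.

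There is no real obstacle here; the only point to watch is the quadratic-in-$s$ scaling of the sample size, which is the price paid for the relaxation $\rho_{s^2}(A) \leq \omega_2(A,s)$ in Lemma \ref{lm:omega_rho}, and the loss of a factor $1/s$ in the bound for $\omega_{\b\infty}(A^TA,s)$ coming from the $\sqrt{s}$ factor in that same lemma. Both are intrinsic to passing from the $\ell_2$-normalized measure $\rho_s$ to the block-$\ell_\infty$-normalized measures $\omega_\diamond$, so the resulting sample complexity in \eqref{eqn:mrandombd} is what one should expect.
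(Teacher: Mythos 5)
Your proposal is correct and is exactly the argument the paper intends (the paper does not even write out a separate proof of Theorem \ref{thm:randomomega}; the remark before Lemma \ref{lm:omega_rho} makes clear it is meant to follow by applying Theorem \ref{thm:randomcmsv} at level $s^2$ and chaining through $\sqrt{s}\sqrt{\omega_{\b\infty}(A^TA,s)} \geq \omega_2(A,s) \geq \rho_{s^2}(A)$). For \eqref{eqn:meanomega_inf} the clean route is the one you mention second: $\E\,\omega_{\b\infty}(A^TA,s) \geq \E\bigl[\rho_{s^2}(A)^2\bigr]/s \geq \bigl(\E\,\rho_{s^2}(A)\bigr)^2/s \geq (1-\epsilon)^2/s$ by Jensen, with no need to re-invoke the tail bound or adjust constants.
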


Equation \eqref{eqn:mrandombd} and Theorem \ref{thm:errorbound} imply that for exact signal recovery in the noise-free case, we need $O(s^2(n + \log p))$ measurements for random sensing matrices. The extra $s$ suggests that the $\omega_\diamond$ based approach to verify exact recovery is not as good as the one based on $\rho_s$. However, $\omega_\diamond$ is computationally more amenable, as we are going to see in Section \ref{sec:computation}. The measurement bound \eqref{eqn:mrandombd} also implies that the algorithms for verifying $\omega_\diamond > 0$ and for computing $\omega_\diamond$ work for $s$ at least up to the order $\sqrt{m/(n + \log p)}$.

Finally, we comment that sensing matrices with \emph{i.i.d.} subgaussian and isotropic rows include the Gaussian ensemble, the Bernoulli ensemble, and normalized volume measure on various convex symmetric bodies, for example, the unit balls of $\ell_q^{np}$ for $2 \leq q \leq \infty$ \cite{mendelson2007subgaussian}. 

\section{Verification and Computation of $\omega_\diamond$}\label{sec:computation}
In this section, we consider the computational issues of $\omega_\diamond(\cdot)$. We will present a very general algorithm and make it specific only when necessary. For this purpose, we use $Q$ to denote either $A$ or $A^TA$. 

\subsection{Verification of $\omega_\diamond > 0$}
A prerequisite for the bounds in Theorem \ref{thm:errorbound} to be valid is the positiveness of the involved $\omega_\diamond(\cdot)$. We call the validation of $\omega_\diamond(\cdot) > 0$ the verification problem. Note that from Theorem \ref{thm:errorbound}, $\omega_\diamond(\cdot) > 0$ implies the exact recovery of the true signal $\x$ in the noise-free case. Therefore, verifying $\omega_\diamond(\cdot) > 0$ is equivalent to verifying a sufficient condition for exact block-$\ell_1$ recovery.

Verifying $\omega_\diamond(Q,s) > 0$ amounts to making sure $\|\z\|_{\b 1}/\|\z\|_{\b \infty}\leq s$ for all $\z$ such that $Q\z = 0$. Therefore, we compute
\begin{eqnarray}\label{eqn:s_star_larger}
  s^* &=& \min_{\z} \frac{\|\z\|_{\b 1}}{\|\z\|_{\b\infty}} \text{\ s.t. \ } Q\z = 0.
\end{eqnarray}
Then, when $s < s^*$, we have $\omega_\diamond(Q,s) > 0$. The following theorem presents an optimization procedure that computes a lower bound on $s^*$.
\begin{proposition}\label{pro:max_sparse}
The reciprocal, denoted by $s_*$, of the optimal value of the following optimization, 
\begin{eqnarray}\label{eqn:max_sparse}
 \max_i\min_{P_i}\max_{j}\|\delta_{ij}\I_n - P_i^TQ_j\|_2
\end{eqnarray}
is a lower bound on $s^*$. Here $P$ is a matrix variable of the same size as $Q$, $\delta_{ij} = 1$ for $i = j$ and $0$ otherwise, and $P = \left[P_1,\ldots,P_p\right]$, $Q = \left[Q_1,\ldots,Q_p\right]$ with $P_i$ and $Q_j$ having $n$ columns each.
\end{proposition}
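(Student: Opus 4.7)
The plan is to show that $1/\alpha^* \le s^*$ where $\alpha^*$ denotes the optimal value of the inner program \eqref{eqn:max_sparse}, by producing, for every $\z \in \ker(Q)$, a per-block inequality of the form $\|\z_i\|_2 \le \alpha^* \|\z\|_{\b 1}$ and then taking the max over $i$.

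The key algebraic step is a simple rearrangement. Fix any $\z$ with $Q\z = 0$ and any candidate matrix $P_i$ of the same column-blocking as $Q$. Since $P_i^T Q \z = 0$, splitting the sum and inserting the identity $\z_i$ on both sides gives
\begin{equation*}
\z_i = \sum_{j=1}^p (\delta_{ij}\I_n - P_i^T Q_j)\,\z_j .
\end{equation*}
Applying the triangle inequality and the operator-norm bound block by block yields
\begin{equation*}
\|\z_i\|_2 \le \sum_j \|\delta_{ij}\I_n - P_i^T Q_j\|_2\,\|\z_j\|_2 \le \Bigl(\max_j \|\delta_{ij}\I_n - P_i^T Q_j\|_2\Bigr) \|\z\|_{\b 1}.
\end{equation*}
This bound is valid for every choice of $P_i$, so we may minimize the right-hand side over $P_i$ to obtain $\|\z_i\|_2 \le \alpha_i^* \|\z\|_{\b 1}$, where $\alpha_i^* \df \min_{P_i} \max_j \|\delta_{ij}\I_n - P_i^T Q_j\|_2$.

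Maximizing the inequality over $i$ delivers $\|\z\|_{\b \infty} = \max_i \|\z_i\|_2 \le \alpha^* \|\z\|_{\b 1}$, with $\alpha^* = \max_i \alpha_i^*$. Equivalently, $\|\z\|_{\b 1}/\|\z\|_{\b \infty} \ge 1/\alpha^*$ for every $\z \in \ker(Q)$, which on taking the infimum over such $\z$ gives $s^* \ge 1/\alpha^* = s_*$, as required.

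I do not anticipate genuine obstacles: the step that does the real work is the identity $\z_i = \sum_j (\delta_{ij}\I_n - P_i^T Q_j)\z_j$, which is an immediate consequence of $Q\z = 0$, and everything after it is norm bookkeeping. The only point to be mindful of is that the order of the three optimizers ($\max_i$ outside, $\min_{P_i}$ in the middle, $\max_j$ inside) matches what arises naturally from the argument, so the chain of inequalities produces precisely the quantity $\alpha^*$ defined in \eqref{eqn:max_sparse} rather than some weaker min-max swap.
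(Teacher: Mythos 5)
Your proof is correct. It rests on the same key identity as the paper's argument --- namely that $P_i^TQ\z=0$ for $\z\in\ker(Q)$, so $\z_i=\sum_j(\delta_{ij}\I_n-P_i^TQ_j)\z_j$ --- but the bookkeeping differs in a way worth noting. The paper frames $1/s^*$ as the optimization $\max_\z\{\|\z\|_{\b\infty}:\|\z\|_{\b 1}\le 1,\ Q\z=0\}$, inserts $P^TQ\z=0$, \emph{drops} the kernel constraint, and then evaluates the relaxed maximum exactly by passing to the extreme points $\e_i^p\otimes\v$ of the block-$\ell_1$ unit balls, identifying the bound as the mixed operator norm $\max_{i,j}\|(\I_{np}-P^TQ)_{i,j}\|_2$. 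You instead prove only the inequality direction that is actually needed, pointwise in $\z$, via the triangle inequality and the spectral-norm bound $\|(\delta_{ij}\I_n-P_i^TQ_j)\z_j\|_2\le\|\delta_{ij}\I_n-P_i^TQ_j\|_2\|\z_j\|_2$. Your route is shorter and more elementary; the paper's route buys the exact identification of the relaxed problem's value as an operator norm (not merely an upper bound), which is the form it reuses in the proof of Proposition~\ref{pro:relax_subproblem}. Your observation about the ordering $\max_i\min_{P_i}\max_j$ is also the right one: since your per-$i$ bound holds for every $P_i$ separately, the minimization legitimately sits inside the outer $\max_i$, matching \eqref{eqn:max_sparse}.
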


The proof is given in Appendix \ref{app:pf:max_sparse}.

Because $s_* < s^*$, the condition $s < s_*$ is a sufficient condition for $\omega_\diamond > 0$ and for the uniqueness and exactness of block-sparse recovery in the noise-free case. To get $s_*$, for each $i$, we need to solve
\begin{eqnarray}\label{eqn:max_sparse_i}
\min_{P_i}\max_{j}\|\delta_{ij}\I_n - P_i^TQ_j\|_2.
\end{eqnarray}
A semidefinite program equivalent to \eqref{eqn:max_sparse_i} is given as follows:
\begin{eqnarray}
&&\min_{P_i, t} t \text{\ s.t.\ } \|\delta_{ij}\I_n - P_i^T Q_j\|_2 \leq t, j = 1, \ldots, p, \label{eqn:max_sparse_cvx1}\\
&\Leftrightarrow & \min_{P_i, t} t \text{\ s.t.\ } \left[
                                                     \begin{array}{cc}
                                                       t\I_n & \delta_{ij}\I_n - P_i^TQ_j \\
                                                       \delta_{ij}I_n -Q_j^T P_i & t\I_n \\
                                                     \end{array}
                                                   \right]\succeq 0, j = 1, \ldots, p.\label{eqn:max_sparse_cvx2}
\end{eqnarray}
Small instances of \eqref{eqn:max_sparse_cvx1} and \eqref{eqn:max_sparse_cvx2} can be solved using CVX \cite{Grant2009CVX}. However, for medium to large scale problems, it is beneficial to use first-order techniques to solve \eqref{eqn:max_sparse_i} directly. We observe that $\max_{j}\|\delta_{ij}\I_n - P_i^TQ_j\|_2$ can be expressed as the largest eigenvalue of a block-diagonal matrix. The smoothing technique for semidefinite optimization developed in \cite{Nesterov2007Smoothing} can be used to minimize the largest eigenvalue with respect to $P_i$. We leave these implementations in future work. 

Due to the equivalence of $A^TA\z = 0$ and $A\z = 0$, we always solve \eqref{eqn:s_star_larger} for $Q = A$ and avoid $Q = A^TA$. The former apparently involves solving semidefinite programs of smaller size. In practice, we usually replace $A$ with the matrix with orthogonal rows obtained from the economy-size QR decomposition of $A^T$.

\subsection{Fixed Point Theory for Computing $\omega_\diamond(\cdot)$}\label{sec:fixedpoint}
We present a general fixed point procedure to compute $\omega_\diamond$. Recall that the optimization problem defining $\omega_\diamond$ is as follows:
\begin{eqnarray}\label{eqn:max_inf_Q_diamond}
  \omega_\diamond(Q,s) = \min_{\z} \frac{\|Q\z\|_\diamond}{\|\z\|_{\b \infty}} \text{\ s.t. \ } \frac{\|\z\|_{\b 1}}{\|\z\|_{\b \infty}} \leq s,
\end{eqnarray}
or equivalently,
\begin{eqnarray}\label{eqn:optimizerho}
  \frac{1}{\omega_\diamond(Q,s)} = \max_{\z} \|\z\|_{\b \infty}\text{\ s.t. \ } \|Q\z\|_\diamond \leq 1, \frac{\|\z\|_{\b 1}}{\|\z\|_{\b \infty}} \leq s.
\end{eqnarray}

For any $s \in (1, s^*)$, we define a function over $[0, \infty)$ parameterized by $s$
\begin{eqnarray}\label{def:fs}
  f_s(\eta) &=& \max_{\z} \left\{\|\z\|_{\b \infty}: \|Q\z\|_\diamond \leq 1, {\|\z\|_{\b 1}} \leq s \eta\right\}.
\end{eqnarray}
We basically replaced the $\|\z\|_{\b\infty}$ in the denominator of the fractional constraint in \eqref{eqn:optimizerho} with $\eta$. It turns out that the unique positive fixed point of $f_s(\eta)$ is exactly $1/\omega_\diamond(Q,s)$, as shown by the following proposition. See Appendix \ref{app:pf:fix_fs} for the proof.

\begin{proposition}\label{pro:fix_fs}
The function $f_s(\eta)$ has the following properties:
\begin{enumerate}
  \item $f_s(\eta)$ is continuous in $\eta$.
  \item $f_s(\eta)$ is strictly increasing in $\eta$.
\item $f_s(0) = 0$, $f_s(\eta) \geq s\eta > \eta$ for sufficiently small $\eta > 0$, and there exists $\rho < 1$ such that $f_s(\eta) < \rho\eta$ for sufficiently large $\eta$.
\item $f_s(\eta)$ has a unique positive fixed point $\eta^* = f_s(\eta^*)$ that is equal to $1/\omega_\diamond(Q,s)$.
\item For $\eta \in (0, \eta^*)$, we have $f_s(\eta) > \eta$,  and for $\eta \in (\eta^*, \infty)$, we have $f_s(\eta) < \eta$.
\item For any $\epsilon > 0$, there exists $\rho_1(\epsilon) > 1$ such that $f_s(\eta) > \rho_1(\epsilon) \eta$ as long as $0 < \eta < (1-\epsilon) \eta^*$, and there exists $\rho_2(\epsilon) < 1$ such that $f_s(\eta) < \rho_2(\epsilon) \eta$ as long as $\eta > (1+\epsilon) \eta^*$.
\end{enumerate}
\end{proposition}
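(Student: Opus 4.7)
The engine of the proof will be the change of variables $\z = \eta \w$ in the definition of $f_s$, which gives the factorization
\begin{equation*}
f_s(\eta) = \eta\,\phi_s(1/\eta), \qquad \phi_s(\mu) \df \max_{\w}\bigl\{\|\w\|_{\b\infty}:\ \|Q\w\|_\diamond \leq \mu,\ \|\w\|_{\b 1} \leq s\bigr\},
\end{equation*}
for $\eta > 0$. This reduces the six claims about $f_s$ to the study of the scalar function $\phi_s$, which is non-negative, non-decreasing in $\mu$, and bounded above by $s$. The crucial fact linking $\phi_s$ back to the goodness measure is the equivalence $\phi_s(\mu) \geq 1 \Longleftrightarrow \mu \geq \omega_\diamond(Q,s)$, which follows directly from the reciprocal form \eqref{eqn:optimizerho}: $\phi_s(\mu) \geq 1$ is the existence of $\w$ with $\|\w\|_{\b\infty}=1,\ \|\w\|_{\b 1}\leq s,\ \|Q\w\|_\diamond\leq\mu$, and this is exactly the statement $\omega_\diamond(Q,s) \leq \mu$.

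Claims (1) and (3) are then nearly routine. Continuity of $\phi_s$ (hence of $f_s$ on $(0,\infty)$) is standard for the optimal value of a convex program with continuous constraint perturbation, and the trivial bound $f_s(\eta) \leq \|\z\|_{\b 1} \leq s\eta$ extends continuity to $\eta = 0$ and gives $f_s(0) = 0$. For $f_s(\eta) \geq s\eta$ at small $\eta$, I would plug in a one-block test vector $\z = s\eta(\e_i\otimes \v)$ with $\|\v\|_2 = 1$ chosen so that the constraint $s\eta\,\|Q_i\v\|_\diamond \leq 1$ holds, which is automatic for $\eta$ below a threshold. For the bound $f_s(\eta) < \rho\eta$ at large $\eta$, decompose any feasible $\z$ as $\z_\parallel + \z_\perp$ with $\z_\parallel \in \ker(Q)$ and $\z_\perp \perp \ker(Q)$: invertibility of $Q$ on $\ker(Q)^\perp$ forces $\|\z_\perp\|_{\b\infty} = O(1/\eta)$, while $s < s^*$ gives $\|\z_\parallel\|_{\b\infty} \leq \|\z_\parallel\|_{\b 1}/s^* \leq (s\eta + O(1))/s^*$, so $\|\z\|_{\b\infty}/\eta \to s/s^* < 1$.

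For (4)--(6) the fixed-point identity falls out of the equivalence above: $\eta = f_s(\eta) \Longleftrightarrow \phi_s(1/\eta) = 1 \Longleftrightarrow 1/\eta = \omega_\diamond \Longleftrightarrow \eta = 1/\omega_\diamond$, giving both existence and identification of $\eta^*$, and uniqueness by the strict inequality $\phi_s(\mu) < 1$ for $\mu < \omega_\diamond$. Claim (5) then reads off the signs: $\phi_s(1/\eta) > 1$ when $\eta < \eta^*$ and $\phi_s(1/\eta) < 1$ when $\eta > \eta^*$. For (6), replace the strict inequalities by uniform ones: compactness of the feasible set in the $\omega_\diamond$ optimization, together with the continuity and monotonicity of $\phi_s$, gives constants $\phi_s(1/((1-\epsilon)\eta^*)) =: \rho_1(\epsilon) > 1$ and $\phi_s(1/((1+\epsilon)\eta^*)) =: \rho_2(\epsilon) < 1$, and monotonicity extends these bounds to the stated half-rays. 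Strict monotonicity (2) is handled by a perturbation along a null-space direction of $Q$: given an optimizer $\z^*$ at $\eta$ with $\|Q\z^*\|_\diamond$ saturated, a small displacement $\z^* + \delta \u$ with $\u \in \ker(Q)$ aligned with the argmax block of $\z^*$ strictly increases $\|\cdot\|_{\b\infty}$, and the extra $\ell_1$ cost is absorbed by any $\eta' > \eta$; the hypothesis $s < s^*$ ensures such a direction is available without pushing the ratio past $s$.

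I expect the main obstacle to be the strict monotonicity in (2) and the corresponding strict inequality needed for uniqueness in (4), since $\phi_s$ could in principle exhibit a plateau at the value $1$ in pathological cases. Ruling this out requires the perturbation argument sketched above to produce \emph{strictly} better feasible points, which is where the structural hypothesis $s < s^*$ (so that $\ker(Q)$ has enough room relative to the active block-$\ell_1$ constraint) is essential; all other properties reduce to monotone manipulations of $\phi_s$.
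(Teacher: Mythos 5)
Your reformulation $f_s(\eta)=\eta\,\phi_s(1/\eta)$ is correct and genuinely reorganizes the paper's argument: the equivalence $\phi_s(\mu)\ge 1\Leftrightarrow\mu\ge\omega_\diamond(Q,s)$ holds (normalize a witness to $\|\w\|_{\b \infty}=1$), your $\ker(Q)\oplus\ker(Q)^\perp$ decomposition for the large-$\eta$ decay in (3) is cleaner than the paper's sequential contradiction, and once one knows the \emph{strict} dichotomy $\phi_s(\mu)<1$ for $\mu<\omega_\diamond$ and $\phi_s(\mu)>1$ for $\mu>\omega_\diamond$, properties (4)--(6) do fall out by monotonicity of $\phi_s$ exactly as you say, whereas the paper proves (6) by a compactness argument with sequences and (4) by the geometric construction of its Figures 1--2. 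The first half of the dichotomy is immediate from $\omega_\diamond$ being a minimum. The second half is the entire difficulty of the proposition, and the tool you propose for it does not apply there. To show $\phi_s(\mu)>1$ for $\mu>\omega_\diamond$ one starts from the minimizer $\z^{**}$ of the $\omega_\diamond$ problem, normalized so $\|\z^{**}\|_{\b \infty}=1$, $\|\z^{**}\|_{\b 1}\le s$, $\|Q\z^{**}\|_\diamond=\omega_\diamond<\mu$: the $Q$-constraint already has slack, and the only constraint that can be tight is $\|\z^{**}\|_{\b 1}=s$. A displacement along $\ker(Q)$ aligned with the argmax block --- your sketched perturbation --- strictly \emph{increases} the block-$\ell_1$ norm and exits the feasible set; it is the right move for strict monotonicity (2), where the $\ell_1$ budget grows from $s\eta$ to $s\eta'$, but it is unavailable here. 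What is needed is the paper's shrink-and-rescale step: since $s>1$, a point with $\|\z\|_{\b 1}=s\|\z\|_{\b \infty}$ has at least two nonzero blocks, so shrinking the smallest nonzero block slightly preserves $\|\cdot\|_{\b \infty}$, strictly decreases $\|\cdot\|_{\b 1}$, and keeps $\|Q\cdot\|_\diamond<\mu$, after which the whole vector can be scaled up to exceed $1$ in block-$\ell_\infty$ norm. The structural hypothesis doing the work in this step is $s>1$, not the room in $\ker(Q)$ relative to $s^*$ that you invoke; $s<s^*$ is what you need for (3) and to ensure $\eta^*<\infty$.

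Two smaller points. Your proof of (2) is essentially the paper's own (it needs the mild assumption that the columns of $Q_j$, $j\neq i$, span $\R^m$ so that a null vector aligned with the argmax block exists, and the unsaturated case $\|Q\z^*\|_\diamond<1$ handled by rescaling). And in (6) your choice $\rho_1(\epsilon)=\phi_s\bigl(1/((1-\epsilon)\eta^*)\bigr)$ yields $f_s(\eta)\ge\rho_1(\epsilon)\eta$ rather than the stated strict inequality; taking any constant strictly between $1$ and that value fixes it. With the shrink-and-rescale step supplied, your $\phi_s$ framework delivers (4)--(6) more transparently than the original proof.
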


We have transformed the problem of computing $\omega_\diamond(Q,s)$ into one of finding the positive fixed point of a one-dimensional function $f_s(\eta)$. Property 6) of Proposition \ref{pro:fix_fs} states that we could start with any $\eta_0$ and use the iteration
\begin{eqnarray}
  \eta_{t+1} = f_s(\eta_{t}), t = 0, 1, \cdots
\end{eqnarray}
to find the positive fixed point $\eta^*$. In addition, if we start from two initial points, one less than $\eta^*$ and one greater than $\eta^*$, then the gap between the generated sequences indicates how close we are to the fixed point $\eta^*$.

Property 5) suggests finding $\eta^*$ by bisection search. Suppose we have an interval $(\eta_\mathrm{L}, \eta_\mathrm{U})$ that includes $\eta^*$. Consider the middle point $\eta_\mathrm{M} = \frac{\eta_\mathrm{L}+\eta_\mathrm{U}}{2}$. If $f_s(\eta_\mathrm{M}) < \eta_\mathrm{M}$, we conclude that $\eta^* < \eta_\mathrm{M}$ and we set $\eta_\mathrm{U} = f(\eta_\mathrm{M})$; if $f_s(\eta_\mathrm{M}) > \eta_\mathrm{M}$, we conclude that $\eta^* > \eta_\mathrm{M}$ and we set $\eta_\mathrm{L} = f(\eta_\mathrm{M})$. We continue this bisection procedure until the interval length $\eta_\mathrm{U}-\eta_\mathrm{L}$ is sufficiently small.

\subsection{Relaxation of the Subproblem}
Unfortunately, except when $n = 1$ and the signal is real, \emph{i.e.,} the real sparse case, it is not easy to compute $f_s(\eta)$ according to \eqref{def:fs}. In the following theorem, we present a relaxation of the subproblem
\begin{eqnarray}\label{eqn:subproblem}
   \max_{\z} \|\z\|_{\b \infty}\text{\ s.t. \ } \|Q\z\|_\diamond \leq 1, {\|\z\|_{\b 1}} \leq s \eta
\end{eqnarray}
by computing an upper bound on $f_s(\eta)$. This proof is similar to that of Proposition \ref{pro:max_sparse} and is given in Appendix \ref{app:pf:relax_subproblem}.
\begin{proposition}\label{pro:relax_subproblem}
When $Q = A$ and $\diamond = 2$, we have
\begin{eqnarray}
  f_s(\eta) &\leq& \max_i \min_{P_i}\max_{j} s\eta\|\delta_{ij}\I_{n}-P_i^TQ_j\|_2 + \|P_i\|_2, and
\end{eqnarray}
when $Q = A^TA$ and $\diamond = \b \infty$, we have
\begin{eqnarray}
  f_s(\eta)  &\leq& \max_i \min_{P_i}\max_{j} s\eta\|\delta_{ij}\I_{n}-P_i^TQ_j\|_2 + \sum_{l = 1}^p\|P_i^l\|_2.
\end{eqnarray}
Here $P_i$ (resp. $Q_j$) is the submatrix of $P$ (resp. $Q$) formed by the $(i-1)n+1$th to $in$th columns (resp. $(j-1)n+1$th to $jn$th columns), and $P_i^l$ is the submatrix of $P$ formed by the $(i-1)n+1$th to $in$th columns and the $(l-1)n+1$th to $ln$th rows.
\end{proposition}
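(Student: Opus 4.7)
The plan is to exploit $\|\z\|_{\b\infty}=\max_i\|\z_i\|_2$ to decompose
$$f_s(\eta)=\max_i\ \max_{\z}\bigl\{\|\z_i\|_2:\ \|Q\z\|_\diamond\le 1,\ \|\z\|_{\b 1}\le s\eta\bigr\},$$
and, for each $i$, produce a bound on the inner problem that is independent of $\z$. Following the same dualisation idea as in Proposition \ref{pro:max_sparse}, I introduce an auxiliary matrix $P_i$ of the same size as a column-block of $Q$ and write the elementary identity
$$\z_i=\sum_{j=1}^{p}\bigl(\delta_{ij}\I_n-P_i^TQ_j\bigr)\z_j+P_i^TQ\z,$$
which holds for every $P_i$. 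Applying the triangle inequality together with the spectral-norm bound $\|M\z_j\|_2\le\|M\|_2\|\z_j\|_2$ to the first sum yields
$$\|\z_i\|_2\le\Bigl(\max_{j}\|\delta_{ij}\I_n-P_i^TQ_j\|_2\Bigr)\|\z\|_{\b 1}+\|P_i^TQ\z\|_2\le s\eta\max_{j}\|\delta_{ij}\I_n-P_i^TQ_j\|_2+\|P_i^TQ\z\|_2,$$
using the block-$\ell_1$ constraint $\|\z\|_{\b 1}\le s\eta$.

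The two cases in the statement correspond to two different ways of bounding $\|P_i^TQ\z\|_2$, and this is the only real technical step. When $Q=A$ and $\diamond=2$, $P_i$ is $m\times n$ and a single application of the operator-norm inequality gives $\|P_i^TQ\z\|_2\le\|P_i\|_2\|Q\z\|_2\le\|P_i\|_2$, yielding the first bound. When $Q=A^TA$ and $\diamond=\b\infty$, write $P_i^T=[(P_i^1)^T,\ldots,(P_i^p)^T]$ in agreement with the block-column decomposition of $Q$, so that $P_i^T(Q\z)=\sum_{l}(P_i^l)^T(Q\z)_l$. The triangle inequality and operator-norm bound then give
$$\|P_i^TQ\z\|_2\le\sum_{l=1}^{p}\|P_i^l\|_2\,\|(Q\z)_l\|_2\le\Bigl(\sum_{l=1}^{p}\|P_i^l\|_2\Bigr)\|Q\z\|_{\b\infty}\le\sum_{l=1}^{p}\|P_i^l\|_2,$$
which produces the second bound.

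Since $P_i$ was arbitrary and the resulting estimate on $\|\z_i\|_2$ no longer depends on $\z$, I can minimise over $P_i$ inside the inner problem and then take the maximum over $i$ to obtain exactly the claimed expressions. The main obstacle is conceptual rather than computational: keeping the indexing of the submatrices $P_i^l$ (block-row $l$, block-column $i$ of $P$) consistent with the block partitions of $Q\z$, and recognising which operator-norm inequality to apply in each case so that the slack term is expressed via $\|P_i\|_2$ versus $\sum_l\|P_i^l\|_2$. Everything else is a straightforward combination of the triangle inequality, the constraints $\|\z\|_{\b 1}\le s\eta$ and $\|Q\z\|_\diamond\le 1$, and the min–max structure of the final bound.
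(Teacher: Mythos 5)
Your proof is correct, and it reaches the paper's bounds by a more elementary, ``primal'' route than the one the paper actually uses. The paper's proof dualizes: it writes $\|\z\|_{\b \infty}=\max_{\|\u\|_{\b 1}\leq 1}\u^T\z$, inserts the correction $\u^T P^T(Q\z-\v)$ with an auxiliary variable $\v = Q\z$, drops the coupling constraint, and then evaluates the resulting bilinear forms by passing to the extreme points $\e_i^p\otimes\v$ of the block-$\ell_1$ balls; the term $\|P_i\|_{2,*}$ emerges as an operator norm between $\|\cdot\|_2$ and the dual norm $\|\cdot\|_\diamond^*$. You instead work blockwise with the exact identity $\z_i=\sum_j(\delta_{ij}\I_n-P_i^TQ_j)\z_j+P_i^TQ\z$ and only use the triangle inequality, $\|\z\|_{\b 1}\leq s\eta$, and $\|Q\z\|_\diamond\leq 1$. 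The underlying relaxation (introduce $P$, split into an $\I-P^TQ$ part controlled by the block-$\ell_1$ constraint and a $P^TQ\z$ part controlled by the feasibility constraint) is the same, and the two case distinctions for bounding $\|P_i^TQ\z\|_2$ exactly mirror the paper's computation of $\|P_i\|_{2,*}$ for $\diamond=2$ versus $\diamond=\b\infty$. What your version buys is transparency: the bound is visibly valid for every feasible $\z$ and every $P_i$ before any optimization, so the final $\max_i\min_{P_i}$ structure is immediate, and no appeal to dual norms or convex-hull extremal points is needed. What the paper's version buys is uniformity with the proof of Proposition \ref{pro:max_sparse} and a cleaner path to generalizations with other norms $\|\cdot\|_\diamond$, since the slack term is identified abstractly as $\|P_i\|_{2,*}$.
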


For each $i = 1, \ldots, p$, the optimization problem
\begin{eqnarray}\label{eqn:relax1}
\min_{P_i}\max_{j} s\eta\|\delta_{ij}\I_{n}-P_i^TQ_j\|_2 + \|P_i\|_2
\end{eqnarray}
can be solved using semidefinite programming:
\begin{eqnarray}
&&\min_{P_i, t_0, t_1} s\eta t_0 + t_1 \text{\ s.t. \ } \|\delta_{ij}\I_n - P_i^TQ_j\|_2 \leq t_0, j = 1, \ldots, p, \|P_i\|_2 \leq t_1,\nonumber\\
  &\Leftrightarrow&\min_{P_i, t_0, t_1} s\eta t_0 + t_1 \text{\ s.t. \ }\nonumber\\
  && \left[
     \begin{array}{cc}
       t_0\I_n & \delta_{ij}\I_n - P_i^TQ_j \\
       \delta_{ij}\I_n - Q_j^TP_i & t_0\I_n \\
     \end{array}
   \right] \succeq 0, j = 1, \ldots, p,\nonumber\\
   &&\left[
     \begin{array}{cc}
       t_1 \I_m & P_i \\
       P_i^T & t_1 \I_n \\
     \end{array}
   \right] \succeq 0.
\end{eqnarray}
Similarly, the optimization problem
\begin{eqnarray}\label{eqn:relax2}
  \min_{P_i}\max_{j} s\eta\|\delta_{ij}\I_{n}-P_i^TQ_j\|_2 + \sum_{l = 1}^p\|P_i^l\|_2
\end{eqnarray}
can be solved by the following semidefinite program:
\begin{eqnarray}
&&\min_{P_i, t_0, t_1, \ldots, t_p} s\eta t_0 + \sum_{l=1}^pt_l \text{\ s.t. \ } \|\delta_{ij}\I_n - P_i^TQ_j\|_2 \leq t_0, j = 1, \ldots, p, \|P_i^l\|_2 \leq t_l, l = 1, \ldots, p, \nonumber\\
  &\Leftrightarrow&\min_{P_i, t_0, t_1} s\eta t_0 + \sum_{l=1}^pt_l \text{\ s.t. \ }\nonumber\\
  && \left[
     \begin{array}{cc}
       t_0\I_n & \delta_{ij}\I_n - P_i^TQ_j \\
       \delta_{ij}\I_n - Q_j^TP_i & t_0\I_n \\
     \end{array}
   \right] \succeq 0, j = 1, \ldots, p,\nonumber\\
   &&\left[
     \begin{array}{cc}
       t_l \I_n & P_i^l \\
       P_i^{lT} & t_l \I_n \\
     \end{array}
   \right] \succeq 0, l = 1, \ldots, p.
\end{eqnarray}

These semidefinite programs can also be solved using first-order techniques.

\subsection{Fixed Point Theory for  Computing a Lower Bound on $\omega_\diamond$}\label{sec:computelowerbd}

Although Proposition \ref{pro:relax_subproblem} provides ways to efficiently compute upper bounds on the subproblem \eqref{eqn:subproblem} for fixed $\eta$, it is not obvious whether we could use it to compute an upper bound on the positive fixed point of $f_s(\eta)$, or $1/\omega_\diamond(Q,s)$. We show in this subsection that another iterative procedure can compute such upper bounds.

To this end, we define functions $g_{s,i}(\eta)$ and $g_s(\eta)$ over $[0, \infty)$, parameterized by $s$ for $s\in (1, s_*)$,
\begin{eqnarray}
  g_{s,i}(\eta) &=& \min_{P_i} s\eta  \left(\max_j \|\delta_{ij}\I_n - P_i^T Q_j\|_2\right) + \|P_i\|_2, \text{\ and}\nonumber\\
  g_s(\eta) & = & \max_i g_{s, i}(\eta).
\end{eqnarray}

The following proposition, whose proof is given in Appendix \ref{app:pf:propertygs}, lists some properties of $g_{s,i}(\eta)$ and $g_s(\eta)$.
\begin{proposition}\label{pro:propertygs}
The functions $g_{s,i}(\eta)$ and $g_s(\eta)$ have the following properties:
\begin{enumerate}
  \item $g_{s,i}(\eta)$ and $g_s(\eta)$ are continuous in $\eta$.
  \item $g_{s,i}(\eta)$ and $g_s(\eta)$ are strictly increasing in $\eta$.
  \item $g_{s,i}(\eta)$ is concave for every $i$.
\item $g_s(0) = 0$, $g_s(\eta) \geq s\eta > \eta$ for sufficiently small $\eta > 0$, and there exists $\rho < 1$ such that $g_s(\eta) < \rho\eta$ for sufficiently large $\eta$; the same holds for $g_{s,i}(\eta)$.
    \item $g_{s,i}$ and $g_s(\eta)$ have unique positive fixed points $\eta_i^* = g_{s,i}(\eta_i^*)$ and $\eta^* = g_s(\eta^*)$, respectively, and $\eta^* = \max_i \eta_i^*$.
    \item For $\eta \in (0, \eta^*)$, we have $g_s(\eta) > \eta$, and for $\eta \in (\eta^*, \infty)$, we have $g_s(\eta) < \eta$; the same statement holds also for $g_{s,i}(\eta)$.
  \item For any $\epsilon > 0$, there exists $\rho_1(\epsilon) > 1$ such that $g_s(\eta) > \rho_1(\epsilon) \eta$ as long as $0 < \eta \leq (1-\epsilon) \eta^*$, and there exists $\rho_2(\epsilon) < 1$ such that $g_s(\eta) < \rho_2(\epsilon) \eta$ as long as $\eta > (1+\epsilon) \eta^*$.
\end{enumerate}
\end{proposition}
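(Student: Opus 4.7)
The plan is to exploit the structural fact that, for each fixed $P_i$, the expression $s\eta\,c_i(P_i) + \|P_i\|_2$, with $c_i(P_i)\df\max_j\|\delta_{ij}\I_n - P_i^TQ_j\|_2\ge 0$, is affine, non-decreasing and non-negative in $\eta$. Hence $g_{s,i}$ is a pointwise infimum of such affine functions and therefore concave in $\eta$ (property 3). The trivial choice $P_i=\bs O$ produces the upper envelope $g_{s,i}(\eta)\le s\eta$, so $g_{s,i}$ is finite-valued on $[0,\infty)$ and, by concavity, continuous there (property 1). Since $g_s=\max_i g_{s,i}$ is a finite maximum, continuity and monotonicity pass to $g_s$.

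For property 4, $g_{s,i}(0)=\min_{P_i}\|P_i\|_2=0$ is immediate. The lower bound $g_s(\eta)\ge s\eta$ in a right neighborhood of $0$ comes from chaining Proposition \ref{pro:relax_subproblem}, which gives $f_s(\eta)\le g_s(\eta)$, with Proposition \ref{pro:fix_fs}(3), which gives $f_s(\eta)\ge s\eta$ for small $\eta$. At infinity, the hypothesis $s<s_*$ together with Proposition \ref{pro:max_sparse} supplies, for each $i$, a matrix $P_i^{(0)}$ with $s\,c_i(P_i^{(0)})<1$; plugging it in yields the affine bound $g_{s,i}(\eta)\le[s\,c_i(P_i^{(0)})]\eta+\|P_i^{(0)}\|_2$ of slope strictly below $1$, so $g_s(\eta)\le\rho\eta$ for some $\rho<1$ and all large $\eta$.

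Properties 5 and 6 follow from a concavity / intermediate-value argument on $h_{s,i}(\eta)\df g_{s,i}(\eta)-\eta$. Each $h_{s,i}$ is concave (concave minus affine), vanishes at $0$, and is eventually strictly negative by property 4; a concave function starting at $0$ and going strictly negative has at most one positive zero, giving uniqueness of $\eta_i^*$ when it exists. For $g_s$, positivity in a neighborhood of $0$ is guaranteed by property 4, so a positive fixed point $\eta^*$ exists. The identity $\eta^*=\max_i\eta_i^*$ is proved by case analysis: for $\eta\ge\max_i\eta_i^*$ every $g_{s,i}(\eta)\le\eta$ with equality only at the argmax, while for $\eta<\max_i\eta_i^*$ the argmax index $i$ yields $g_{s,i}(\eta)>\eta$ and hence $g_s(\eta)>\eta$. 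The sign statement in property 6 for $g_s$ follows, and concavity plus the strict sign of $h_s$ on either side of $\eta^*$ gives strict monotonicity through $\eta^*$ (property 2).

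For the quantitative bounds in property 7, the idea is linearization via concavity. On $[0,\eta^*]$, the chord from $(0,0)$ through $(\eta^*,\eta^*)$ lies below the concave $g_s$; evaluating concavity at an interior point $(1-\epsilon)\eta^*$ together with the large initial slope (inherited from $g_s(\eta)\ge s\eta$) upgrades this to a strict multiplicative gap $\rho_1(\epsilon)>1$. Symmetrically on $[\eta^*,\infty)$, combining $g_s(\eta^*)=\eta^*$ with the asymptotic-slope-less-than-one bound from property 4 produces $\rho_2(\epsilon)<1$. The main obstacle I anticipate is the degenerate case where the infimum defining some $g_{s,i}$ is attained at a $P_i$ with $c_i(P_i)=0$, so that $g_{s,i}$ is bounded and eventually constant; the strict-monotonicity and chord arguments must be handled carefully in that regime, but since $s<s_*$ forces the relevant effective slopes to be positive near the fixed point, the estimates for properties 2, 6, and 7 remain valid on $(0,\eta^*]$, which is all that the fixed-point framework of Section \ref{sec:computelowerbd} requires.
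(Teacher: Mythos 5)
Your route is genuinely different from the paper's: where the paper invokes Berge's Maximum Theorem for continuity and runs compactness/contradiction arguments for uniqueness and for property 7, you lean almost entirely on the observation that each $g_{s,i}$ is an infimum of nonnegative affine functions of $\eta$, hence concave, and derive continuity, uniqueness of the fixed point, and the multiplicative gaps from chord estimates. Where it works, this is cleaner and more quantitative than the paper's argument (the paper's proof of property 7, for instance, is purely existential). Two cosmetic cautions: continuity of a concave function at the endpoint $\eta=0$ is not automatic and needs the squeeze $0\leq g_{s,i}(\eta)\leq s\eta$ (which you have); and $g_s=\max_i g_{s,i}$ is a maximum of concave functions, hence \emph{not} concave, so every chord argument must be run on an individual $g_{s,i}$ (in particular on the argmax index realizing $\eta^*$), not on ``the concave $g_s$'' as written.

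The genuine gap is in property 4 and everything downstream of it. You obtain $g_s(\eta)\geq s\eta$ for small $\eta$ by chaining $f_s\leq g_s$ (Proposition \ref{pro:relax_subproblem}) with $f_s(\eta)\geq s\eta$ (Proposition \ref{pro:fix_fs}), but this only bounds the maximum $\max_i g_{s,i}$ from below; it says nothing about an individual $g_{s,i}$. The per-index bound $g_{s,i}(\eta)\geq s\eta$ near $0$ is load-bearing: without it, your uniqueness argument for $\eta_i^*$ fails, because the claim ``a concave function vanishing at $0$ and eventually strictly negative has at most one positive zero'' is false --- the concave function equal to $0$ on $[0,1]$ and to $1-\eta$ thereafter has a continuum of zeros. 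One needs a point strictly above the diagonal in $(0,\eta_i^*)$ to force strictness of the chord inequality. The same per-index bound is needed for the \emph{existence} of each $\eta_i^*$ (asserted in property 5), for the sign statements for $g_{s,i}$ in property 6, for the case analysis establishing $\eta^*=\max_i\eta_i^*$, and for extracting $\rho_1(\epsilon)>1$ in property 7 (the chord from $(0,0)$ to $(\eta^*,\eta^*)$ alone gives only $g_{s,i_0}(\eta)\geq\eta$, with no multiplicative slack). The fix is short and is what the paper does: for every $i$, $\max_j\|\delta_{ij}\I_n-P_i^TQ_j\|_2\geq 1-\|P_i\|_2\|Q_i\|_2$, whence $g_{s,i}(\eta)\geq s\eta+\min_{P_i}\|P_i\|_2\bigl(1-s\eta\|Q_i\|_2\bigr)=s\eta$ whenever $\eta<1/(s\|Q_i\|_2)$. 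With that line inserted, your concavity-based argument goes through and is a valid alternative to the paper's proof; the degenerate ``eventually constant'' case you flag does not disturb it, since all that is used on $(0,\eta_i^*]$ is the strict chord inequality, not global strict monotonicity.
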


The same properties in Proposition \ref{pro:propertygs} hold for the functions defined below:
\begin{eqnarray}
  h_{s, i}(\eta) &=& \min_{P_i} s\eta  \left(\max_j \|\delta_{ij}\I_n - P_i^T Q_j\|_2\right) + \sum_{l=1}^p\|P_i^l\|_2, \text{\ and}\\
  h_s(\eta) & = & \max_i h_{s,i}(\eta).
\end{eqnarray}

\begin{proposition}\label{pro:propertyhs}
The functions $h_{s,i}(\eta)$ and $h_s(\eta)$ have the following properties:
\begin{enumerate}
  \item $h_{s,i}(\eta)$ and $h_s(\eta)$ are continuous in $\eta$.
  \item $h_{s,i}(\eta)$ and $h_s(\eta)$ are strictly increasing in $\eta$.
  \item $h_{s,i}(\eta)$ is concave for every $i$.
\item $h_s(0) = 0$, $h_s(\eta) \geq s\eta > \eta$ for sufficiently small $\eta > 0$, and there exists $\rho < 1$ such that $h_s(\eta) < \rho\eta$ for sufficiently large $\eta$; the same holds for $h_{s,i}(\eta)$.
    \item $h_{s,i}$ and $h_s(\eta)$ have unique positive fixed points $\eta_i^* = h_{s,i}(\eta_i^*)$ and $\eta^* = h_s(\eta^*)$, respectively, and $\eta^* = \max_i \eta_i^*$.
    \item For $\eta \in (0, \eta^*)$, we have $h_s(\eta) > \eta$, and for $\eta \in (\eta^*, \infty)$, we have $h_s(\eta) < \eta$; the same statement holds also for $h_{s,i}(\eta)$.
  \item For any $\epsilon > 0$, there exists $\rho_1(\epsilon) > 1$ such that $h_s(\eta) > \rho_1(\epsilon) \eta$ as long as $0 < \eta \leq (1-\epsilon) \eta^*$, and there exists $\rho_2(\epsilon) < 1$ such that $h_s(\eta) < \rho_2(\epsilon) \eta$ as long as $\eta > (1+\epsilon) \eta^*$.
\end{enumerate}
\end{proposition}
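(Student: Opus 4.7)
The plan is to transcribe the proof of Proposition \ref{pro:propertygs} nearly verbatim, with the single substitution $\|P_i\|_2 \mapsto \sum_{l=1}^p \|P_i^l\|_2$. The reason this works is that every property in the earlier list was derived from three generic features of the regularizer: continuity in $P_i$, coercivity (growth to infinity as the norm of $P_i$ grows), and non-negativity with equality only at $P_i = 0$. The sum $\sum_{l=1}^p \|P_i^l\|_2$ is itself a norm on the matrix space (each block spectral contribution is a semi-norm and the sum separates points), so it inherits all three features; moreover it is equivalent to $\|P_i\|_2$ up to constants depending only on $p$, so every order-of-magnitude estimate in the $g$-proof carries over.

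With that observation in place, I would walk through the seven properties in order. Continuity (1) follows from joint continuity of the inner objective together with coercivity of the regularizer, which confines the minimizer to a bounded set and allows an application of Berge's maximum theorem. For each fixed $P_i$, the inner objective is affine in $\eta$ with non-negative slope $s\max_j\|\delta_{ij}\I_n - P_i^T Q_j\|_2$, so $h_{s,i}$ is a pointwise infimum of an affine family in $\eta$, hence concave, yielding property 3; the same representation yields monotonicity, and strict monotonicity (property 2) follows from the hypothesis $s<s_*$, which via Proposition \ref{pro:max_sparse} forces every optimizer to have strictly positive slope. Concavity together with $h_{s,i}(0)=0$ makes $h_{s,i}(\eta)/\eta$ non-increasing, hence each $h_{s,i}$ has at most one positive fixed point; combined with the asymptotics below, existence and the identification $\eta^* = \max_i \eta_i^*$ for $h_s$ follow by a straightforward case split, giving properties 5 and 6.

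For the asymptotic statements (property 4) and the quantitative bounds (property 7), I would plug $P_i = 0$ into $h_{s,i}$ to get $h_{s,i}(\eta) \leq s\eta$, and then use concavity with $h_{s,i}(0)=0$ to conclude that the slope at the origin is exactly $s$, so $h_{s,i}(\eta) = s\eta$ for all sufficiently small $\eta$; since $s>1$, this gives $h_s(\eta) \geq s\eta > \eta$ near $0$. For large $\eta$, I would invoke $s<s_*$ together with Proposition \ref{pro:max_sparse} to choose $P_i^\ast$ with $M(P_i^\ast) := \max_j \|\delta_{ij}\I_n - P_i^{\ast T}Q_j\|_2 < 1/s$, whence $h_{s,i}(\eta) \leq s M(P_i^\ast)\eta + \sum_l \|(P_i^\ast)^l\|_2 \leq \rho\eta$ for any $\rho \in (sM(P_i^\ast),1)$ once $\eta$ is large enough. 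The quantitative bounds in property 7 are then obtained by comparing the slope of $h_{s,i}$ (by concavity, a decreasing function of $\eta$) with the slope of the linear interpolant between $(0,0)$ and $(\eta_i^*, \eta_i^*)$.

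The main obstacle I anticipate is not algebraic but bookkeeping: one must verify that replacing $\|P_i\|_2$ by $\sum_l \|P_i^l\|_2$ does not degrade any of the coercivity or magnitude estimates used in the $g$-proof, and in particular that the argument for strict positivity of the slope of every optimizer still works (this is the single place where $s<s_*$ is actually consumed). Because the two regularizers are equivalent norms on the matrix space, these verifications are routine, with only the dependence on $p$ in the intermediate constants changing.
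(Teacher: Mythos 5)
Your proposal is correct and coincides with the paper's (implicit) proof: the paper offers no separate argument for Proposition \ref{pro:propertyhs}, relying on the fact that the proof of Proposition \ref{pro:propertygs} goes through verbatim once $\|P_i\|_2$ is replaced by $\sum_{l=1}^p\|P_i^l\|_2$, and the one inequality this substitution requires --- namely $\sum_{l=1}^p\|P_i^l\|_2 \ge \|P_i\|_2$, so that the small-$\eta$ lower bound $h_{s,i}(\eta)\ge s\eta + \min_{P_i}\|P_i\|_2\left(1-s\eta\|Q_i\|_2\right)\ge s\eta$ survives --- does hold because $P_i$ is the vertical stack of the blocks $P_i^l$. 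One small correction: your shortcut ``concavity plus $h_{s,i}(\eta)\le s\eta$ forces the slope at the origin to equal $s$'' is a non sequitur (those facts only bound the slope from above); the matching lower bound must come, as in the proof for $g_{s,i}$, from $\max_j\|\delta_{ij}\I_n - P_i^TQ_j\|_2\ge 1-\|P_i\|_2\|Q_i\|_2$ combined with the norm inequality above.
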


An immediate consequence of Propositions \ref{pro:propertygs} and \ref{pro:propertyhs} is the following:
\begin{theorem}
Suppose $\eta^*$ is the unique fixed point of $g_s(\eta)$ ($h_s(\eta)$, resp.), then we have
\begin{eqnarray}
\eta^* \geq \frac{1}{\omega_2(A,s)} \left(\frac{1}{\omega_{\b \infty}(A^TA,s)}, \text{\ resp.\ }\right).
\end{eqnarray}
\end{theorem}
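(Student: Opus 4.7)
The plan is to combine the upper bound on the subproblem value furnished by Proposition \ref{pro:relax_subproblem} with the fixed-point characterizations provided by Propositions \ref{pro:fix_fs}, \ref{pro:propertygs}, and \ref{pro:propertyhs}. Specifically, let $\tilde\eta^* \df 1/\omega_2(A,s)$, which by Proposition \ref{pro:fix_fs}(4) is the unique positive fixed point of $f_s$ when $Q=A$, $\diamond = 2$; analogously let $\tilde\eta^* \df 1/\omega_{\b\infty}(A^TA,s)$ in the other case. Our target inequality $\eta^* \geq \tilde\eta^*$ will follow from a one-line comparison of the two fixed-point functions.

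First I would observe that Proposition \ref{pro:relax_subproblem} yields the pointwise domination
\begin{eqnarray}
f_s(\eta) \;\leq\; g_s(\eta) \quad \text{for all } \eta \geq 0
\end{eqnarray}
in the case $Q = A$, $\diamond = 2$, and similarly $f_s(\eta) \leq h_s(\eta)$ in the case $Q = A^TA$, $\diamond = \b\infty$, since $g_s(\eta)$ (resp.\ $h_s(\eta)$) is obtained by taking the max over $i$ of the specific upper bounds on $f_s(\eta)$ derived in Proposition \ref{pro:relax_subproblem}. Evaluating at $\eta = \tilde\eta^*$ and using $f_s(\tilde\eta^*) = \tilde\eta^*$ gives
\begin{eqnarray}
g_s(\tilde\eta^*) \;\geq\; f_s(\tilde\eta^*) \;=\; \tilde\eta^*
\end{eqnarray}
(and the analogous statement with $h_s$).

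Next I would invoke Proposition \ref{pro:propertygs}(6) (resp.\ Proposition \ref{pro:propertyhs}(6)): for every $\eta > \eta^*$ one has $g_s(\eta) < \eta$. Contrapositively, $g_s(\eta) \geq \eta$ forces $\eta \leq \eta^*$. Applying this to $\eta = \tilde\eta^*$ immediately yields $\tilde\eta^* \leq \eta^*$, which is precisely $\eta^* \geq 1/\omega_2(A,s)$ (resp.\ $\eta^* \geq 1/\omega_{\b\infty}(A^TA,s)$).

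I do not anticipate any real obstacle: the two key ingredients (the relaxation inequality $f_s \leq g_s$ and the ``$g_s(\eta) < \eta$ above the fixed point'' property) have already been established in the cited propositions, so the argument is essentially a two-step chase. The only subtlety worth spelling out is that the pointwise inequality $f_s \leq g_s$ really does follow from Proposition \ref{pro:relax_subproblem} for \emph{every} $\eta \geq 0$ (not just at the fixed point), because the relaxation there is performed with $\eta$ treated as a parameter; once this is observed, evaluating at the fixed point of $f_s$ and comparing to the fixed point of $g_s$ closes the argument.
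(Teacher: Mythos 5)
Your argument is correct and is exactly the chain the paper has in mind: the paper states this theorem as "an immediate consequence" of Propositions \ref{pro:propertygs} and \ref{pro:propertyhs} without writing out the details, and the details are precisely your two steps (the pointwise domination $f_s \leq g_s$ coming from Proposition \ref{pro:relax_subproblem}, followed by the contrapositive of the "$g_s(\eta)<\eta$ above the fixed point" property applied at $\tilde\eta^* = f_s(\tilde\eta^*) = 1/\omega_\diamond(Q,s)$). No gaps.
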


Proposition \ref{pro:propertygs} implies three ways to compute the fixed point $\eta^*$ for $g_s(\eta)$. The same discussion is also valid for $h_s(\eta)$.
\begin{enumerate}
  \item \textbf{Naive Fixed Point Iteration:} Property 7) of Proposition \ref{pro:propertygs} suggests that the fixed point iteration
      \begin{eqnarray}
        \eta_{t+1} &=& g_s(\eta_t), t = 0, 1, \ldots
      \end{eqnarray}
      starting from any initial point $\eta_0$ converges to $\eta^*$, no matter whether $\eta_0 < \eta^*$ or $\eta_0 > \eta^*$. The algorithm can be made more efficient in the case $\eta_0 < \eta^*$. More specifically, since $g_s(\eta) = \max_i g_{s,i}(\eta)$, at each fixed point iteration, we set $\eta_{t+1}$ to be the first $g_{s,i}(\eta_t)$ that is greater than $\eta_t + \epsilon$, with $\epsilon$ being some tolerance parameter. If for all $i$, $g_{s,i}(\eta_t) < \eta_t + \epsilon$, then $g_s(\eta_t) = \max_i g_{s,i}(\eta_t) < \eta_t + \epsilon$, which indicates that the optimal function value can not be improved greatly and the algorithm should terminate. In most cases, to get $\eta_{t+1}$, we need to solve only one optimization problem, $\min_{P_i} s\eta  \left(\max_j \|\delta_{ij}\I_n - P_i^T Q_j\|_2\right) + \|P_i\|_2$, instead of solving for $p$. This is in contrast to the case where $\eta_0 > \eta^*$, because in the latter case we must compute all $g_{s,i}(\eta_t)$ to update $\eta_{t+1} = \max_i g_{s,i}(\eta_t)$. An update based on a single $g_{s,i}(\eta_t)$ might generate a value smaller than $\eta^*$.

      The naive fixed point iteration has two major disadvantages. First, the stopping criterion based on successive improvement is not accurate as it does not reflect the gap between $\eta_t$ and $\eta^*$. This disadvantage can be remedied by starting from both below and above $\eta^*$. The distance between corresponding terms in the two generated sequences is an indication of the gap to the fixed point $\eta^*$. However, the resulting algorithm is very slow, especially when updating $\eta_{t+1}$ from above $\eta^*$. Second, the iteration process is slow, especially when close to the fixed point $\eta^*$, because $\rho_1(\epsilon)$ and $\rho_2(\epsilon)$ in 7) of Proposition \ref{pro:propertygs} are close to 1.
  \item \textbf{Bisection:} The bisection approach is motivated by property 6) of Proposition \ref{pro:propertygs}. Starting from an initial interval $(\eta_\mathrm{L}, \eta_\mathrm{U})$ that contains $\eta^*$, we compute $g_s(\eta_\mathrm{M})$ with $\eta_\mathrm{M} = (\eta_{\mathrm{L}} + \eta_\mathrm{U})/2$. As a consequence of property 6), $g_s(\eta_\mathrm{M}) > \eta_\mathrm{M}$ implies $g_s(\eta_\mathrm{M}) < \eta^*$, and we set $\eta_\mathrm{L} = g_s(\eta_\mathrm{M})$; $g_s(\eta_\mathrm{M}) < \eta_\mathrm{M}$ implies $g_s(\eta_\mathrm{M}) > \eta^*$, and we set $\eta_\mathrm{U} = g_s(\eta_\mathrm{M})$. The bisection process can also be accelerated by setting $\eta_\mathrm{L} = g_{s,i}(\eta_\mathrm{M})$ for the first $g_{s,i}(\eta_\mathrm{M})$ greater than $\eta_\mathrm{M}$. The convergence of the bisection approach is much faster than the naive fixed point iteration because each iteration reduces the interval length at least by half. In addition, half the length of the interval is an upper bound on the gap between $\eta_{\mathrm{M}}$ and $\eta^*$, resulting an accurate stopping criterion. However, if the initial $\eta_{\mathrm{U}}$ is too much larger than $\eta^*$, the majority of $g_s(\eta_{\mathrm{M}})$ would turn out to be less than $\eta^*$. The verification of $g_s(\eta_{\mathrm{M}}) < \eta_{\mathrm{M}}$ requires solving $p$ semidefinite programs, greatly degrading the algorithm's performance.
  \item \textbf{Fixed Point Iteration $+$ Bisection:} The third approach combines the advantages of the bisection method and the fixed point iteration method, at the level of $g_{s,i}(\eta)$. This method relies on the representations $g_s(\eta) = \max_i g_{s,i}(\eta)$ and $\eta^* = \max_i \eta_i^*$.

      Starting from an initial interval $(\eta_{\mathrm{L}0}, \eta_\mathrm{U})$ and the index set $\sI_0 = \{1,\ldots, p\}$, we pick any $i_0 \in \sI_0$ and use the (accelerated) bisection method with starting interval $(\eta_{\mathrm{L}0}, \eta_{\mathrm{U}})$ to find the positive fixed point $\eta_{i_0}^*$ of $g_{s,i_0}(\eta)$. For any $i \in \sI_0/i_0$, $g_{s,i}(\eta_{i_0}^*) \leq \eta_{i_0}^*$ implies that the fixed point $\eta_i^*$ of $g_{s,i}(\eta)$ is less than or equal to $\eta_{i_0}^*$ according to the continuity of $g_{s,i}(\eta)$ and the uniqueness of its positive fixed point. As a consequence, we remove this $i$ from the index set $\sI_0$. We denote $\sI_1$ as the index set after all such $i$s are removed, \emph{i.e.,} $\sI_1 = \sI_0/\{i: g_{s,i}(\eta_{i_0}^*) \leq \eta_{i_0}^*\}$. We also set $\eta_{L1} = \eta_{i_0}^*$ as $\eta^* \geq \eta_{i_0}^*$. Next we test the $i_1 \in \sI_1$ with the \emph{largest} $g_{s,i}(\eta_{i_0}^*)$ and construct $\sI_2$ and $\eta_{\mathrm{L}2}$ in a similar manner. We repeat the process until the index set $\sI_t$ is empty. The $\eta_i^*$ found at the last step is the maximal $\eta_i^*$, which is equal to $\eta^*$.
\end{enumerate}

\section{Preliminary Numerical Experiments}\label{sec:numerical}
In this section, we present preliminary numerical results that assess the performance of the algorithms for verifying $\omega_2(A,s) > 0$ and computing $\omega_2(A,s)$. We also compare the error bounds based on $\omega_2(A,s)$ with the bounds based on the block RIP \cite{Eldar2009BlockSparse}. The involved semidefinite programs are solved using CVX.

We test the algorithms on Gaussian random matrices. The entries of Gaussian matrices are randomly generated from the standard Gaussian distribution. All $m\times np$ matrices are normalized to have columns of unit length.

We first present the values of $s_*$ computed by \eqref{eqn:max_sparse}, and the values of $k_*$ ($k_* = \lfloor s_*/2 \rfloor$), and compare them with the corresponding quantities when $A$ is used as the sensing matrix for sparsity recovery without knowing the block-sparsity structure. The quantities in the latter case are computed using the algorithms developed in \cite{tang2011cmsv, tang2011linf}. We note in Table \ref{tbl:sparselevel} that for the same sensing matrix $A$, both $s_*$ and $k_*$ are smaller when the block-sparse structure is taken into account than when it is not taken into account. However, we need to keep in mind that the true sparsity level in the block-sparse model is $nk$, where $k$ is the block sparsity level. The $nk_*$ in the fourth column for the block-sparse model is indeed much greater than the $k_*$ in the sixth column for the sparse model, implying that exploiting the block-sparsity structure is advantageous.

\begin{table}[h!t]
\caption{Comparison of the sparsity level bounds on block-sparsity recovery and sparsity recovery for a Gaussian sensing matrix $A \in \R^{m\times np}$ with $n = 4, p = 60$.}
\begin{center}
\begin{tabular}{||l||l|l|l||l|l||}
\hline
\multirow{2}{*}{$m$} & \multicolumn{3}{c||}{Block Sparse Model} & \multicolumn{2}{c||}{Sparse Model}\\
\cline{2-3}\cline{4-6}
& $s_*$ & $k_*$ & $nk_*$ & $s_*$ & $k_*$\\
\hline
\hline
72 & 3.96 & 1 &  4 & 6.12 & 3\\
\hline
 96 & 4.87 & 2 &  8 & 7.55 & 3\\
\hline
120 & 5.94 & 2 &  8 & 9.54 & 4\\
\hline
144 & 7.14 & 3 & 12 & 11.96 & 5\\
\hline
168 & 8.60 & 4 & 16 & 14.66 & 7\\
\hline
192 & 11.02 & 5 & 20 & 18.41 & 9\\
\hline
\end{tabular}\label{tbl:sparselevel}
\end{center}
\end{table}

In the next set of experiments, we compare the computation times for the three implementation methods discussed at the end of Section \ref{sec:computelowerbd}. The Gaussian matrix $A$ is of size $72\times 120$ with $n = 3$ and $p = 40$. The tolerance parameter is $10^{-5}$. The initial $\eta$ value for the naive fixed point iteration is $0.1$. The initial lower bound $\eta_\mathrm{L}$ and upper bound $\eta_{\mathrm{U}}$ are set as $0.1$ and $10$, respectively. All three implementations yield $\eta^* = 0.7034$. The CPU times for the three methods are $393$ seconds, $1309$ seconds, and $265$ seconds. Therefore, the fixed point iteration + bisection gives the most efficient implementation in general. The bisection search method is slow in this case because the initial $\eta_U$ is too much larger than $\eta^*$.

In the last experiment, we compare our recovery error bounds on the BS-BP based on $\omega_2(A,s)$ with those based on the block RIP. Recall that from Corollary \ref{cor:l2errorbound}, we have for the BS-BP
\begin{eqnarray}\label{eqn:bp_omega_bd}
  \|\hx - \x\|_2 \leq \frac{2\sqrt{2k}}{\omega_2(A,2k)}\varepsilon.
\end{eqnarray}
For comparison, the block RIP  bounds is
\begin{eqnarray}\label{eqn:bp_rip_bd}
\|\hx-\x\|_2 \leq \frac{4\sqrt{1+\delta_{2k}(A)}}{1-(1+\sqrt{2})\delta_{2k}(A)}\varepsilon,
\end{eqnarray}
assuming the block RIP $\delta_{2k}(A) < \sqrt{2}-1$ \cite{Eldar2009BlockSparse}. Without loss of generality, we set $\varepsilon = 1$.

The block RIP is computed using Monte Carlo simulations. More explicitly, for $\delta_{2k}(A)$, we randomly take $1000$ sub-matrices of $A \in \R^{m\times np}$ of size $m\times 2nk$ with a pattern determined by the block-sparsity structure, compute the maximal and minimal singular values $\sigma_1$ and $\sigma_{2k}$, and approximate $\delta_{2k}(A)$ using the maximum of $\max(\sigma_1^2 - 1, 1 - \sigma_{2k}^2)$ among all sampled sub-matrices. Obviously, the approximated block RIP is always smaller than or equal to the exact block RIP. As a consequence, the performance bounds based on the exact block RIP are \emph{worse} than those based on the approximated block RIP. Therefore, in cases where our $\omega_2(A,2k)$ based bounds are better (tighter, smaller) than the approximated block RIP bounds, they are even better than the exact block RIP bounds.

In Table \ref{tbl:Gaussian_omega_bric}, we present the values of $\omega_2(A,2k)$ and $\delta_{2k}(A)$ computed for a Gaussian matrix $A \in \R^{m\times np}$ with $n = 4$ and $p = 60$. The corresponding $s_*$ and $k_*$ for different $m$ are also included in the table. We note that in all the considered cases, $\delta_{2k}(A) > \sqrt{2} - 1$, and the block RIP based bound \eqref{eqn:bp_rip_bd} does not apply at all. In contrast, the $\omega_2$ based bound \eqref{eqn:bp_omega_bd} is valid as long as $k \leq k_*$. In Table \eqref{tbl:omega2_bd} we show the $\omega_2$ based bound \eqref{eqn:bp_omega_bd}.

\begin{table}[h!t]
\caption{$\omega_2(A,2k)$ and $\delta_{2k}(A)$ computed for a Gaussian matrix $A \in \R^{m\times np}$ with $n = 4$ and $p = 60$.}
\begin{center}
\hskip -.3 cm

\begin{tabular}{||l|l||l|l|l|l|l|l|}
\cline{2-8}
\multicolumn{1}{l|}{\multirow{2}{*}{}} & $m$ & 72 & 96 & 120 & 144 & 168 & 192\\
\cline{2-8}
\multicolumn{1}{l|}{}& $s_*$ & 3.88  &  4.78  &  5.89  &  7.02  &  8.30 &  10.80\\
\hline
$k$ & $k_*$ & 1  &   2  &   2  &   3  &   4  &   5\\
\hline\hline

\multirow{2}{*}{1} & $\omega_2(A,2k)$ & 0.45 &   0.53  &  0.57  &  0.62 &   0.65  &  0.67\\
& $\delta_{2k}(A)$ & 0.90 &    0.79 &    0.66 &    0.58 &    0.55 &    0.51\\
\hline

\multirow{2}{*}{2} & $\omega_2(A,2k)$ & \multirow{2}{*}{-----} &   0.13 &    0.25 &    0.33 &    0.39 &    0.43\\
& $\delta_{2k}(A)$ & & 1.08 &    0.98 &    0.96 &    0.84 &    0.75\\
\hline

\multirow{2}{*}{3} & $\omega_2(A,2k)$ & \multirow{2}{*}{-----}  & \multirow{2}{*}{-----}   & \multirow{2}{*}{-----}   &  0.11 &    0.18 &    0.25\\
& $\delta_{2k}(A)$ & & & &1.12 &    1.01 &    0.93\\
\hline

\multirow{2}{*}{4} & $\omega_2(A,2k)$ & \multirow{2}{*}{-----}  & \multirow{2}{*}{-----}   & \multirow{2}{*}{-----}  & \multirow{2}{*}{-----}   &    0.02 &    0.12\\
& $\delta_{2k}(A)$ & & & & &    1.26 &    1.07\\
\hline

\multirow{2}{*}{5} & $\omega_2(A,2k)$ & \multirow{2}{*}{-----}   & \multirow{2}{*}{-----}   & \multirow{2}{*}{-----}   & \multirow{2}{*}{-----}   & \multirow{2}{*}{-----}  &    0.03\\
& $\delta_{2k}(A)$ & & & & &  &    1.28\\
\hline
\end{tabular}\label{tbl:Gaussian_omega_bric}
\end{center}
\end{table}

\begin{table}[h!t]
\caption{The $\omega_2(A,2k)$ based bounds on the $\ell_2$ norms of the errors of the BS-BP for the Gaussian matrix in Table \ref{tbl:Gaussian_omega_bric}.}
\begin{center}
\hskip -.3 cm

\begin{tabular}{||l|l||l|l|l|l|l|l|}
\cline{2-8}
\multicolumn{1}{l|}{\multirow{2}{*}{}} & $m$ & 72 & 96 & 120 & 144 & 168 & 192\\
\cline{2-8}
\multicolumn{1}{l|}{}& $s_*$ & 3.88  &  4.78  &  5.89  &  7.02  &  8.30 &  10.80\\
\hline
$k$ & $k_*$ & 1  &   2  &   2  &   3  &   4  &   5\\
\hline\hline

1 & $\omega_2$ bound & 6.22 &   13.01 &    9.89 &    6.50 &   11.52 &    9.50\\
\hline

2 & $\omega_2$ bound & ----- &   58.56 &   25.37 &   14.64 &    7.30 &   16.26\\
\hline

3 & $\omega_2$ bound & -----  & -----   &-----   &  53.54 &   21.63 &   30.27\\
\hline

4 & $\omega_2$ bound &-----   &-----    & -----  & -----  &    236.74 &   23.25\\
\hline

5 & $\omega_2$ bound &-----   &-----    & -----  & -----  &  -----  &    127.59\\
\hline
\end{tabular}\label{tbl:omega2_bd}
\end{center}
\end{table}

\section{Conclusions}\label{sec:conclusions}
In this paper, we analyzed the performance of convex block-sparse signal recovery algorithms using the block-$\ell_{\b \infty}$ norm of the errors as a performance criterion. We expressed other popular performance criteria in terms of the block-$\ell_{\b \infty}$ norm. A family of goodness measures of the sensing matrices was defined using optimization procedures. We used these goodness measures to derive upper bounds on the block-$\ell_{\b \infty}$ norms of the reconstruction errors for the Block-Sparse Basis Pursuit, the Block-Sparse Dantzig Selector, and the Block-Sparse LASSO estimator. Efficient algorithms based on fixed point iteration, bisection, and semidefinite programming were implemented to solve the optimization procedures defining the goodness measures. We expect that these goodness measures will be useful in comparing different sensing systems and recovery algorithms, as well as in designing optimal sensing matrices. In future work, we will use these computable performance bounds to optimally design transmitting waveforms for compressive sensing radar.
\section{Appendix: Proofs}
\subsection{Proof of Proposition \ref{pro:errorcharacteristics}}\label{app:pf:errorcharacteristics}

\begin{proof} Suppose $S = \bsupp(\x)$ and $|S| = \|\x\|_{\b 0} = k$. Define the error vector $\h = \hx - \x$.

We first prove the proposition for the BS-BP and the BS-DS. The fact that $\|\hx\|_{\b 1} = \|\x + \h\|_{\b 1}$ is the minimum among all $\z$s satisfying the constraints in \eqref{bp} and \eqref{ds}, together with the fact that the true signal $\x$ satisfies the constraints as required by the conditions imposed on the noise in Proposition \ref{pro:errorcharacteristics}, implies that $\|\h_{S^c}\|_{\b 1}$ cannot be very large. To see this, note that
\begin{eqnarray}\label{x_min}
  \|\x\|_{\b 1} &\geq& \|\x + \h\|_{\b 1}\nonumber\\
  & = & \sum_{i\in S}\|\x_i + \h_i\|_2 + \sum_{i\in S^c}\|\x_i + \h_i\|_2 \nonumber\\
  &\geq & \sum_{i\in S}\|\x_i\|_2 - \sum_{i\in S}\|\h_i\|_2 + \sum_{i\in S^c}\|\h_i\|_2 \nonumber\\
  &=& \|\x_S\|_{\b 1} - \|\h_S\|_{\b 1} + \|\h_{S^c}\|_{\b 1}\nonumber\\
  & = & \|\x\|_{\b 1} - \|\h_S\|_{\b 1} + \|\h_{S^c}\|_{\b 1}.
\end{eqnarray}
Therefore, we obtain $\|\h_{S}\|_{\b 1} \geq \|\h_{S^c}\|_{\b 1}$, which leads to
\begin{eqnarray}\label{h1h2}
2\|\h_S\|_{\b 1}  \geq \|\h_S\|_{\b 1} + \|\h_{S^c}\|_{\b 1} =  \|\h\|_{\b 1}.
\end{eqnarray}

We now turn to the BS-LASSO \eqref{lasso}. Since the noise $\w$ satisfies $\|A^T\w\|_{\b \infty}\leq \kappa\mu$ for some $\kappa \in (0, 1)$, and $\hx$ is the minimizer of \eqref{lasso}, we have
\begin{eqnarray*}
\frac{1}{2}\|A\hx - \y\|_2^2 + \mu \|\hx\|_{\b 1} \leq \frac{1}{2} \|A\x - \y\|_2^2 + \mu\|\x\|_{\b 1}.
\end{eqnarray*}
Consequently, substituting $\y = A\x + \w$ yields
\begin{eqnarray*}
\mu \|\hx\|_{\b 1} &\leq& \frac{1}{2} \|A\x - \y\|_2^2 - \frac{1}{2}\|A\hx - \y\|_2^2 + \mu \|\x\|_{\b 1} \nn\\
&=& \frac{1}{2}\|\w\|_2^2 - \frac{1}{2}\|A(\hx-\x) - \w\|_2^2 + \mu\|\x\|_{\b 1}\nn\\
& = & \frac{1}{2}\|\w\|_2^2 - \frac{1}{2}\|A(\hx-\x)\|_2^2\nn\\
  && \ \ \ + \left<A(\hx-\x), \w\right> - \frac{1}{2}\|\w\|_2^2 + \mu\|\x\|_{\b 1}\nn\\
&\leq & \left<A(\hx-\x), \w\right> + \mu\|\x\|_{\b 1}\nn\\
& = & \left<\hx - \x, A^T\w\right> + \mu\|\x\|_{\b 1}.
\end{eqnarray*}
Using the Cauchy-Swcharz type inequality, we get
\begin{eqnarray*}
\mu \|\hx\|_{\b 1} &\leq & \|\hx-\x\|_{\b 1}\|A^T\w\|_{\b \infty} + \mu\|\x\|_{\b 1}\nn\\
& = & \kappa \mu \|\h\|_{\b 1} + \mu\|\x\|_{\b 1},
\end{eqnarray*}
which leads to
\begin{eqnarray*}
  \|\hx\|_{\b 1} &\leq & \kappa\|\h\|_{\b 1} + \|\x\|_{\b 1}.
\end{eqnarray*}
Therefore, similar to the argument in \eqref{x_min}, we have
\begin{eqnarray*}
 &&\|\x\|_{\b 1} \nn\\
 &\geq& \|\hx\|_{\b 1} - \kappa\|\h\|_{\b 1}\nn\\
 & = & \|\x + \h_{S^c} + \h_S\|_{\b 1}- \kappa\left(\|\h_{S^c} + \h_S\|_{\b 1}\right) \nonumber\\
  &\geq& \|\x + \h_{S^c} \|_{\b 1} - \|\h_S\|_{\b 1} - \kappa\left(\|\h_{S^c}\|_{\b 1} + \|\h_S\|_{\b 1}\right) \nonumber\\
  & = & \|\x\|_{\b 1} + (1-\kappa)\|\h_{S^c}\|_{\b 1} - (1+\kappa)\|\h_S\|_{\b 1},
\end{eqnarray*}
where $S = \bsupp(\x)$.
Consequently, we have
\begin{eqnarray*}
  \|\h_{S}\|_{\b 1} &\geq& \frac{1-\kappa}{1+\kappa}\|\h_{S^c}\|_{\b 1}.
\end{eqnarray*}
Therefore, similar to \eqref{h1h2}, we obtain
\begin{eqnarray}\label{lassoh1h2}
\frac{2}{1-\kappa}\|\h_S\|_{\b 1} &=& \frac{1+\kappa}{1-\kappa}\|\h_S\|_{\b 1} + \frac{1-\kappa}{1-\kappa} \|\h_{S}\|_{\b 1}\nn\\
&\geq& \frac{1+\kappa}{1-\kappa}\frac{1-\kappa}{1+\kappa}\|\h_{S^c}\|_{\b 1} + \frac{1-\kappa}{1-\kappa} \|\h_{S}\|_{\b 1}\nn\\
& = & \|\h\|_{\b 1}. 
\end{eqnarray}
\qed
\end{proof}

\subsection{Proof of Corollary \ref{cor:connections}}\label{app:pf:connections}
\begin{proof}
Suppose $S = \bsupp(\x)$. According to Proposition \ref{pro:errorcharacteristics}, we have
\begin{eqnarray}
  \|\h\|_{\b 1} \leq c \|\h_S\|_{\b 1} \leq  c k \|\h\|_{\b \infty}.
\end{eqnarray}

To prove \eqref{eqn:l2linf}, we note
\begin{eqnarray}
  \frac{\|\h\|_2^2}{\|\h\|_{\b \infty}^2} &=& \sum_{i=1}^p\left(\frac{\|\h_i\|_2}{\|\h\|_{\b \infty}}\right)^2\nonumber\\
  &\leq &  \sum_{i=1}^p\left(\frac{\|\h_i\|_2}{\|\h\|_{\b \infty}}\right)\nonumber\\
  &= & \frac{\|\h\|_{\b 1}}{\|\h\|_{\b \infty}}\nonumber\\
  &\leq & ck.
\end{eqnarray}
For the first inequality, we have used $\frac{\|\h_i\|_2}{\|\h\|_{\b \infty}}\leq 1$ and $a^2 \leq a$ for $ a \in [0,1]$.

For the last assertion, note that if $\|\h\|_{\b \infty} \leq \beta/2$, then we have for $i \in S$,
\begin{eqnarray}
  \|\hx_i\|_2 = \|\x_i + \h_i\|_2 \geq \|\x_i\|_2 - \|\h_i\|_2 > \beta - \beta/2 = \beta/2;
\end{eqnarray}
and for $i \notin S$,
\begin{eqnarray}
  \|\hx_i\|_2 = \|\x_i + \h_i\|_2 = \|\h_i\|_2 < \beta/2.
\end{eqnarray}
\qed
\end{proof}

\subsection{Proof of Theorem \ref{thm:errorbound}}\label{app:pf:errorbound}
\begin{proof}
Observe that for the BS-BP
\begin{eqnarray}
\|A(\hx - \x)\|_2 &\leq& \|\y - A\hx\|_2 + \|\y - A\x\|_2\nn\\
&\leq& \varepsilon + \|A\w\|_2\nn\\
&\leq& 2\varepsilon,
\end{eqnarray}
and similarly,
\begin{eqnarray}
  \|A^TA(\hx - \x)\|_{\b \infty}\leq 2\mu
\end{eqnarray}
for the BS-DS, and
\begin{eqnarray}
\|A^TA(\hx - \x)\|_{\b \infty}&\leq& (1+\kappa)\mu
\end{eqnarray}
for the BS-LASSO. Here for the BS-LASSO, we have used
\begin{eqnarray}
  \|A^T(A\hx -\y)\|_{\b \infty} \leq \mu,
\end{eqnarray}
a consequence of the optimality condition
\begin{eqnarray}
  A^T(A\hx - \y) \in \mu \partial \|\hx\|_{\b 1}
\end{eqnarray}
and the fact that the $i$th block of any subgradient in $\partial \|\hx\|_{\b 1}$ is $\hx_i/\|\hx_i\|_2$ if $\hx_i \neq 0$ and is $\g$ otherwise for some $\|\g\|_2 \leq 1$.

The conclusions of Theorem \ref{thm:errorbound} follow from equation \eqref{eqn:l1linf} and Definition \ref{def:linfcmsv}. \qed
\end{proof}

\subsection{Proof of Lemma \ref{lm:omega_rho}}\label{app:pf:lm_omega_rho}
\begin{proof}
For any $\z$ such that $\|\z\|_{\b \infty}= 1$ and $\|\z\|_{\b 1} \leq s$, we have
\begin{eqnarray}\label{eqn:quad_binf}
  \z A^TA\z & = &\left<\z, A^TA\z\right>\nn\\
   &\leq & \|\z\|_{\b 1}\|A^TA\z\|_{\b \infty}\nn\\
   &\leq & s\|A^TA\z\|_{\b \infty}.
\end{eqnarray}
Taking the minima of both sides of \eqref{eqn:quad_binf} over $\{\z: \|\z\|_{\b \infty}= 1, \|\z\|_1 \leq s\}$ yields
\begin{eqnarray}
  \omega_2^2(A,s) &\leq& s \omega_{\b \infty}(A^TA,s).
\end{eqnarray}

For the other inequality, note that $\|\z\|_{\b 1}/\|\z\|_{\b \infty}\leq s$ implies $\|\z\|_{\b 1} \leq s\|\z\|_{\b \infty}\leq s \|\z\|_2$, or equivalently,
\begin{eqnarray}
  \{\z: \|\z\|_{\b 1}/\|\z\|_{\b \infty}\leq s \} \subseteqq  \{\z: \|\z\|_{\b 1}/\|\z\|_2 \leq s \}.
\end{eqnarray}
As a consequence, we have
\begin{eqnarray}
  \omega_2(A,s) &=& \min_{\|\z\|_{\b 1}/\|\z\|_{\b \infty}\leq s}\frac{\|A\z\|_2}{\|\z\|_2} \frac{\|\z\|_2}{\|\z\|_{\b \infty}}\nn\\
  &\geq &   \min_{\|\z\|_{\b 1}/\|\z\|_{\b \infty}\leq s}\frac{\|A\z\|_2}{\|\z\|_2}\nn\\
  &\geq &   \min_{\|\z\|_{\b 1}/\|\z\|_2 \leq s}\frac{\|A\z\|_2}{\|\z\|_2}\nn\\
  & = & \rho_{s^2}(A),
\end{eqnarray}
where the first inequality is due to $\|\z\|_2 \geq \|\z\|_{\b \infty}$, and the second inequality is because the minimization is taken over a larger set. \qed
\end{proof}

\subsection{Proof of Theorem \ref{thm:randomcmsv}}\label{app:pf:randomcmsv}
We first need some definitions. Suppose $X$ is a scalar random variable, the Orlicz $\psi_2$ norm of $X$ is defined as
\begin{eqnarray}
\|X\|_{\psi_2} = \inf \left\{t > 0: \E \exp\left(\frac{|X|^2}{t^2}\right) \leq 2\right\}.
\end{eqnarray}
The $\psi_2$ norm is closely related to the subgaussian constant $L$. We actually can equivalently define that a random vector $\bs X \in \R^{np}$ is isotropic and subgaussian with constant $L$ if $\E|\left<\bs X, \u\right>|^2 = \|\u\|_2^2$ and $\|\left<\bs X, \u\right>\|_{\psi_2} \leq L \|\u\|_2$ for any $\u \in \R^{np}$ \cite{tang2011cmsv}.

We use the notation $\ell_*(\H) = \E\ \sup_{\u \in \H} \left<\g, \u\right>$ with $\g \sim \N(0, \I_{np})$, \emph{i.e.}, a vector of independent zero-mean unit-variance Gaussians, to denote the Gaussian width of any set $\H \subset \R^{np}$.

We now cite a result on the behavior of empirical processes \cite{tang2011cmsv, mendelson2007subgaussian}:
\begin{theorem}\hskip -0.1cm\emph{\cite{tang2011cmsv, mendelson2007subgaussian}}\label{thm:empirical}
Let $\{\a, \a_i, i = 1,\ldots,m\} \subset \R^{np}$ be \emph{i.i.d.} isotropic and subgaussian random vectors.  $\H$ is a subset of the unit sphere of $\R^{np}$, and $\sF = \{f_{\u}(\cdot) = \left<\u, \cdot\right>: \u \in \H\}$. Suppose $\mathrm{diam}(\sF, \|\cdot\|_{\psi_2}) \df \max_{f, g \in \sF}\|f-g\|_{\psi_2} = \alpha$. Then there exist absolute constants $c_1, c_2, c_3$ such that for any $\epsilon > 0$ and $m \geq 1$ satisfying
\begin{eqnarray}
  m \geq c_1 \frac{\alpha^2 \ell_*^2(\H)}{\epsilon^2},
\end{eqnarray}
with probability at least $1 - \exp(-c_2\epsilon^2 m/\alpha^4)$,
\begin{eqnarray}
  \sup_{f\in \sF}\left|\frac{1}{m} \sum_{k=1}^m f^2(\a_k) - \E f^2(\a)\right| \leq \epsilon.
\end{eqnarray}
Furthermore, if $\sF$ is symmetric, we have
\begin{eqnarray}
\hskip -1cm &&\E \sup_{f \in \sF} \left|\frac{1}{m} \sum_{k=1}^m f^2(\a_k) - \E f^2(\a)\right| \nn\\
\hskip -1cm &&\leq c_3 \max\left\{\alpha \frac{\ell_*(\H)}{\sqrt{m}}, \frac{\ell_*^2(\H)}{m}\right\}.
\end{eqnarray}
\end{theorem}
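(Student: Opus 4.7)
The plan is to reduce Theorem \ref{thm:randomomega} entirely to Theorem \ref{thm:randomcmsv} via the comparison inequalities in Lemma \ref{lm:omega_rho}. Recall that Lemma \ref{lm:omega_rho} gives the chain
\begin{eqnarray}
\sqrt{s}\sqrt{\omega_{\b \infty}(A^TA,s)} \;\geq\; \omega_2(A,s) \;\geq\; \rho_{s^2}(A),
\end{eqnarray}
so it suffices to control $\rho_{s^2}(A)$ and then push the estimate upward. The measurement bound \eqref{eqn:mrandombd} is exactly the hypothesis of Theorem \ref{thm:randomcmsv} with the sparsity parameter $s$ in that theorem replaced by $s^2$. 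I will therefore invoke Theorem \ref{thm:randomcmsv} with this substitution to conclude, for the same constants $c_1, c_2$, that
\begin{eqnarray}
\E\,\rho_{s^2}(A) \geq 1-\epsilon,\qquad \Pr\{\rho_{s^2}(A)\geq 1-\epsilon\}\geq 1-\exp(-c_2\epsilon^2m/L^4).
\end{eqnarray}

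The statements about $\omega_2(A,s)$ then follow immediately from the pointwise inequality $\omega_2(A,s)\geq \rho_{s^2}(A)$: monotonicity of expectation gives the mean bound, and the inclusion of events $\{\rho_{s^2}(A)\geq 1-\epsilon\}\subseteq\{\omega_2(A,s)\geq 1-\epsilon\}$ yields the tail bound.

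For the block-$\ell_\infty$ quantity I rearrange the left half of Lemma \ref{lm:omega_rho} to $\omega_{\b\infty}(A^TA,s)\geq \omega_2^2(A,s)/s$. The probability bound \eqref{eqn:probomega_inf} then follows from the event inclusion $\{\omega_2(A,s)\geq 1-\epsilon\}\subseteq\{\omega_{\b\infty}(A^TA,s)\geq (1-\epsilon)^2/s\}$, which I can combine directly with the tail estimate for $\omega_2(A,s)$. For the mean bound \eqref{eqn:meanomega_inf} the direct route through expectations is slightly subtler; the plan is to use Jensen's inequality applied to the convex map $x\mapsto x^2$, together with the nonnegativity of $\omega_2(A,s)$ and the fact that $\E\,\omega_2(A,s)\geq 1-\epsilon \geq 0$ (assuming WLOG $\epsilon\leq 1$, else the statement is vacuous):
\begin{eqnarray}
\E\,\omega_{\b\infty}(A^TA,s) \;\geq\; \frac{1}{s}\,\E\bigl[\omega_2^2(A,s)\bigr] \;\geq\; \frac{1}{s}\bigl(\E\,\omega_2(A,s)\bigr)^2 \;\geq\; \frac{(1-\epsilon)^2}{s}.
\end{eqnarray}

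No new concentration analysis is required; the only things to verify carefully are that the sparsity substitution $s\mapsto s^2$ leaves the constants intact, and that the Jensen step for the block-$\ell_\infty$ mean bound is legitimate under the sign constraints. I do not anticipate any serious obstacle, since both steps are purely algebraic once Lemma \ref{lm:omega_rho} and Theorem \ref{thm:randomcmsv} are in hand.
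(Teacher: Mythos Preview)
Your proposal does not address the stated theorem at all. The statement you were given is Theorem~\ref{thm:empirical}, a general concentration result for empirical processes indexed by linear functionals $f_{\u}(\cdot)=\langle \u,\cdot\rangle$ over a subset $\H$ of the unit sphere, with a bound expressed in terms of the $\psi_2$-diameter $\alpha$ and the Gaussian width $\ell_*(\H)$. Your write-up instead proves Theorem~\ref{thm:randomomega}: you invoke Lemma~\ref{lm:omega_rho}, substitute $s\mapsto s^2$ into Theorem~\ref{thm:randomcmsv}, and push the resulting bounds through to $\omega_2(A,s)$ and $\omega_{\b\infty}(A^TA,s)$. None of those objects---$\omega_\diamond$, $\rho_s$, the sensing matrix $A$, the block parameters $n,p$---appear in Theorem~\ref{thm:empirical}, whose conclusion is purely about $\sup_{f\in\sF}\bigl|\tfrac{1}{m}\sum_k f^2(\a_k)-\E f^2(\a)\bigr|$.

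Moreover, Theorem~\ref{thm:empirical} is not proved in the paper: it is quoted from \cite{tang2011cmsv, mendelson2007subgaussian} and used as a black box inside the proof of Theorem~\ref{thm:randomcmsv}. So there is no internal proof to compare against, and any proposal would have to go back to generic chaining or related subgaussian process machinery from those references. If your intention was actually to prove Theorem~\ref{thm:randomomega}, then your argument is essentially correct and matches the natural route (the paper does not spell out a separate proof of Theorem~\ref{thm:randomomega}, but Lemma~\ref{lm:omega_rho} combined with Theorem~\ref{thm:randomcmsv} is clearly the intended path); but as a proof of the stated Theorem~\ref{thm:empirical} it is simply off target.
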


With these preparations, we proceed to the proof of Theorem \ref{thm:randomcmsv}: 
\begin{proof}[Proof of Theorem \ref{thm:randomcmsv}] We apply Theorem \ref{thm:empirical} to estimate the block $\ell_1$-CMSV. According to the assumptions of Theorem \ref{thm:randomcmsv}, the rows $\{\a_i\}_{i=1}^m$ of $\sqrt{m} A$ are \emph{i.i.d.} isotropic and subguassian random vectors with constant $L$. Consider the set $\H = \{\u \in \R^{np}: \|\u\|_2^2 = 1, \|\u\|_{\b 1}^2 \leq s\}$ and the function set $\sF = \{f_{\u}(\cdot) = \left<\u, \cdot\right>: \u \in \H\}$. Clearly both $\H$ and $\sF$ are symmetric. The diameter of $\sF$ satisfies
\begin{eqnarray}
\alpha &=& \mathrm{diam}(\sF, \|\cdot\|_{\psi_2})\nn\\
& \leq& 2 \sup_{\u\in \H} \| \left<\u,\a\right>\|_{\psi_2} = 2L.
\end{eqnarray}

Since $\E f^2(\a) = \E\left<\u, \a\right>^2 = \|\u\|_2^2 = 1$ when $\a$ follows the same distribution as $\{\a_i\}_{i=1}^m$, we observe that 
for any $\epsilon \in (0,1)$
\begin{eqnarray}
  \rho_s(A)^2 = \min_{\u: \u \in \H} \u^TA^TA\u < (1-\epsilon)^2 < (1-\epsilon)
\end{eqnarray}
is a consequence of
\begin{eqnarray}\label{eqn:supform}
&&\sup_{\u \in \H}\left|\frac{1}{m}\u^T(\sqrt{m}A)^T(\sqrt{m}A)\u - 1\right|\nn\\
&=&  \sup_{\u \in \H}\left|\frac{1}{m}\sum_{i=1}^m f^2_{\u}(\a_i) - \E f^2_{\u}(\a)\right| \leq \epsilon.
\end{eqnarray}

In view of Theorem \ref{thm:empirical}, the key is to estimate the Gaussian width $\ell_*(\H)$
\begin{eqnarray}\label{eqn:estimate_gwidth}
\ell_*(\H) & =  & \E\ \sup_{\u: \|\u\|_2 = 1, \|\u\|_{\b 1}^2 \leq s} \left<\g, \u\right> \nn\\
&\leq& \E\|\u\|_{\b 1}\|\g\|_{\b \infty}\nn\\
&\leq &\sqrt{s}\ \E\|\g\|_{\b \infty}.
\end{eqnarray}
The quantity $\E \|\g\|_{\b \infty}$ can be bounded using Slepian's inequality \cite[Section 3.3]{ledoux1991probability}. We rearrange the vector $\g \in \R^{np}$ into a matrix $G \in \R^{n\times p}$ such that the vectorization $\mathrm{vec}(G) = \g$. Clearly, we have $\|\g\|_{\b \infty} = \|G\|_{1,2}$, where $\|\cdot\|_{1,2}$ denotes the matrix norm as an operator from $(\R^n, \|\cdot\|_{\ell_1})$ to $(\R^p, \|\cdot\|_{\ell_2})$. Recognizing $\|G\|_{1,2} = \max_{\v \in S^{n-1}, \w \in T^{p-1}}\left<G\v, \w\right>$, we define the Gaussian process $X_{\v, \w} = \left<G\v, \w\right>$ indexed by $(\v, \w) \in S^{n-1}\times T^{p-1}$. Here $S^{n-1} = \{\v\in \R^n: \|\v\|_2 = 1\}$ and $T^{p-1} = \{\w \in \R^p: \|\w\|_1 = 1\}$. We compare $X_{\v, \w}$ with another Gaussian process $Y_{\v, \w} = \left<\bs \xi, \v\right> + \left<\bs \zeta, \w\right>, (\v, \w)\in S^{n-1}\times T^{p-1}$, where $\bs \xi \sim \N(0, \I_n)$ and $\bs \zeta \sim \N(0, \I_p)$. The Gaussian processes $X_{\v, \w}$ and $Y_{\v, \w}$ satisfy the conditions for Slepian's inequality (See the proof of \cite[Theorem 32, page 23]{vershynin2011randommatrices}). Therefore, we have
\begin{eqnarray}\label{eqn:estimate_ginf}
  \E \|\g\|_{\b \infty} &=& \E \max_{(\v,\w) \in S^{n-1}\times T^{p-1}} X_{\v, \w} \leq \E \max_{(\v,\w) \in S^{n-1}\times T^{p-1}} Y_{\v, \w}\nn\\
  & = & \E \max_{\v \in S^{n-1}} \left<\bs \xi, \v\right> + \E \max_{\w \in T^{p-1}} \left<\bs \zeta, \w\right>\nn\\
  & = & \E \|\bs \xi\|_2 + \E \|\bs \zeta\|_\infty\nn\\
  &\leq & \sqrt{n} + \sqrt{\log p}.
\end{eqnarray}
Here we have used
\begin{eqnarray}
  \E \|\bs \xi\|_2 \leq \sqrt{\E \|\bs \xi\|_2^2} = \sqrt{n}
\end{eqnarray}
due to Jensen's inequality, and
\begin{eqnarray}
  \E \|\bs \zeta\|_\infty = \E \max_i {\bs \zeta}_i \leq \sqrt{\log p},
\end{eqnarray}
a fact given by \cite[Equation 3.13, page 79]{ledoux1991probability}.

The conclusion of Theorem \ref{thm:randomcmsv} then follows from \eqref{eqn:estimate_gwidth}, \eqref{eqn:estimate_ginf}, and suitable choice of $c_1$. \qed
\end{proof}

\subsection{Proof of Proposition \ref{pro:max_sparse}}\label{app:pf:max_sparse}
\begin{proof}
We rewrite the optimization \eqref{eqn:s_star_larger} as
\begin{eqnarray}\label{eqn:max_inf_Q_1}
  \frac{1}{s^*} = \max_{\z}\|\z\|_{\b \infty}\text{\ s.t. \ } Q\z = 0, \|\z\|_{\b 1} \leq 1.
\end{eqnarray}
Note that in \eqref{eqn:max_inf_Q_1}, we are maximizing a convex function over a convex set, which is in general very difficult. We will use a relaxation technique to compute an upper bound on the optimal value of \eqref{eqn:max_inf_Q_1}. Define a matrix variable $P$ of the same size as $Q$. Since the dual norm of $\|\cdot\|_{\b \infty}$ is $\|\cdot\|_{\b 1}$, we have
\begin{eqnarray}\label{eqn:relax}
  &&\max_{\z}\left\{\|\z\|_{\b \infty}: \|\z\|_{\b 1} \leq 1, Q\z = 0\right\}\nonumber\\
&=&\max_{\u,\z}\left\{\u^T\z: \|\z\|_{\b 1} \leq 1, \|\u\|_{\b 1} \leq 1, Q\z = 0\right\}\nonumber\\
&=& \max_{\u,\z}\left\{\u^T(\z-P^TQ\z): \|\z\|_{\b 1} \leq 1, \|\u\|_{\b 1} \leq 1, Q\z = 0\right\}\nonumber\\
&\leq & \max_{\u,\z}\left\{\u^T(\I_{np}-P^TQ)\z: \|\z\|_{\b 1} \leq 1, \|\u\|_{\b 1} \leq 1\right\}.
\end{eqnarray}
In the last expression, we have dropped the constraint $Q\z = 0$. Note that the unit ball $\{\z: \|\z\|_{\b 1} \leq 1\}\subset \R^{np}$ is the convex hull of $\{\e_i^p\otimes \v: 1\leq i \leq p, \v\in \R^n, \|\v\|_2 \leq 1\}$ and $\u^T(\I_{np}-P^TQ)\z$ is convex (actually, linear) in $\z$. As a consequence, we have
\begin{eqnarray}\label{eqn:upperbound_unitball}
&&\max_{\u,\z}\left\{\u^T(\I_{np}-P^TQ)\z: \|\z\|_{\b 1} \leq 1, \|\u\|_{\b 1} \leq 1\right\}\nonumber\\
&=&\max_{j, \v,\u}\left\{\u^T(\I_{np}-P^TQ)(\e_i^p\otimes \v): \|\u\|_{\b 1} \leq 1, \|\v\|_2 \leq 1\right\}\nonumber\\
&=& \max_{j}\max_{\v,\u}\left\{\u^T(\I_{np}-P^TQ)_j\v: \|\u\|_{\b 1} \leq 1, \|\v\|_{2} \leq 1\right\}\nonumber\\
&=& \max_{j}\max_{\u}\left\{\|(\I_{np}-P^TQ)_j^T\u\|_2: \|\u\|_{\b 1} \leq 1\right\},
\end{eqnarray}
where $(\I_{np}-P^TQ)_j$ denotes the $j$th column blocks of $\I_{np}-P^TQ$, namely, the submatrix of $\I_{np}-P^TQ$ formed by the $((j-1)n+1)$th to $jn$th columns.

Applying the same argument to the unit ball $\{\u: \|\u\|_{\b 1} \leq 1\}$ and the convex function $\|(\I_{np}-P^TQ)_j^T\u\|_2$, we obtain
\begin{eqnarray}
&&\max_{\u,\z}\left\{\u^T(\I_{np}-P^TQ)\z: \|\z\|_{\b 1} \leq 1, \|\u\|_{\b 1} \leq 1\right\}\nonumber\\
&=& \max_{i, j}\|(\I_{np}-P^TQ)_{i,j}\|_2.
\end{eqnarray}
Here $(\I_{np}-P^TQ)_{i,j}$ is the submatrix of $\I_{np}-P^TQ$ formed by the $((i-1)n+1)$th to $in$th rows and the $((j-1)n+1)$th to $jn$th columns, and $\|\cdot\|_2$ is the spectral norm (the largest singular value).

Since $P$ is arbitrary, the tightest upper bound is obtained by minimizing $\max_{i, j}\|(\I_{np}-P^TQ)_{i,j}\|_2$ with respect to $P$:
\begin{eqnarray}
  1/s^* &= & \max_{\z}\left\{\|\z\|_{\b \infty}: \|\z\|_{\b 1} \leq 1, Q\z = 0\right\}\nonumber\\
&\leq & \min_{P}\max_{i, j}\|(\I_{np}-P^TQ)_{i,j}\|_2\nonumber\\
& \df & 1/s_*.
\end{eqnarray}

Partition $P$ and $Q$ as $P = \left[P_1,\ldots,P_p\right]$ and $Q = \left[Q_1,\ldots,Q_p\right]$, with $P_i$ and $Q_j$ having $n$ columns each. We explicitly write
\begin{eqnarray}
(\I_{np}-P^TQ)_{i,j} = \delta_{ij}\I_n - P_i^TQ_j,
\end{eqnarray}
where $\delta_{ij} = 1$ for $i = j$ and $0$ otherwise. As a consequence, we obtain
\begin{eqnarray}
  &&\min_{P}\max_{i, j}\|(\I_{np}-P^TQ)_{i,j}\|_2\nonumber\\
   &=&\min_{P_1,\ldots,P_p}\max_{i}\max_j\|\delta_{ij}\I_n - P_i^TQ_j\|_2\nonumber\\
    &=&\max_i\min_{P_i}\max_{j}\|\delta_{ij}\I_n - P_i^TQ_j\|_2.
\end{eqnarray}
We have moved the $\max_i$ to the outmost because for each $i$, $\max_{j}\|\delta_{ij}\I_n - P_i^TQ_j\|_2$ is a function of only $P_i$ and does not depends on other variables $P_l, l\neq i$. \qed
\end{proof}

\subsection{Proof of Proposition \ref{pro:fix_fs}}\label{app:pf:fix_fs}
\begin{proof}
\begin{enumerate}
\item Since in the optimization problem defining $f_s(\eta)$, the objective function $\|\z\|_{\b \infty}$ is continuous, and the constraint correspondence
\begin{eqnarray}
  C(\eta): [0, \infty) &\twoheadrightarrow& \R^{np}\nonumber\\
  \eta &\mapsto& \left\{\z: \|Q\z\|_\diamond \leq 1, {\|\z\|_{\b 1}} \leq s \eta\right\}
\end{eqnarray}
is compact-valued and continuous (both upper and lower hemicontinuous), according to Berge's Maximum Theorem \cite{Berge1997maximum} the optimal value function $f_s(\eta)$ is continuous.

\item The monotone (non-strict) increasing property is obvious as increasing $\eta$ enlarges the region over which the maximization is taken. We now show the strict increasing property. Suppose $0< \eta_1 < \eta_2$, and $f_s(\eta_1)$ is achieved by $\z_1^* \neq 0$, namely, $f_s(\eta_1) = \|\z_1^*\|_{\b \infty}$, $\|Q\z_1^*\|_\diamond \leq 1$, and $\|\z_1^*\|_{\b 1} \leq s \eta_1$. If $\|Q\z_1^*\|_\diamond < 1$ (this implies $\|\z_1^*\|_{\b 1} = s\eta_1$), we define $\z_2 = c \z_1^*$ with $c = \min\left(1/\|Q\z_1^*\|_\diamond, s\eta_2/\|\z_1^*\|_{\b 1}\right) > 1$. We then have $\|Q\z_2\|_\diamond \leq 1$, $\|\z_2\|_{\b 1} \leq s\eta_2$, and $f_s(\eta_2) \geq \|\z_2\|_{\b\infty} = c\|\z_1^*\|_{\b\infty} > f_s(\eta_1)$.

Consider the remaining case that $\|Q\z_1^*\|_\diamond = 1$. Without loss of generality, suppose $\|\z_1^*\|_{\b \infty} = \max_{1\leq j \leq p} \|\z_{1j}^*\|_2$ is achieved by the block $\z_{11}^* \neq 0$. Since $Q_1\z_{11}^*$ is linearly dependent with the columns of $\{Q_j, j =2, \ldots, p\}$ ($m$ is much less than $(p-1)n$), there exist $\{\bs \alpha_j \in \R^n\}_{j=2}^p$ such that $Q_1\z_{11}^* + \sum_{j=2}^p Q_j\bs \alpha_j = 0$. Define $\bs \alpha = \left[\z_{11}^{*T}, \bs \alpha_2^T, \cdots, \bs \alpha_p^T\right] \in \R^{np}$ satisfying $Q\bs \alpha = 0$ and $\z_2 = \z_1^* + c \bs \alpha$ for $c > 0$ sufficiently small such that $\|\z_2\|_{\b 1} \leq \|\z_1^*\|_{\b 1} + c \|\bs \alpha\|_{\b 1} \leq s\eta_1 + c\|\bs \alpha\|_{\b 1} \leq s\eta_2$. Clearly, $\|Q\z_2\|_\diamond = \|Q\z_1^* + cQ\bs \alpha\|_\diamond = \|Q\z_1^*\|_\diamond = 1$. As a consequence, we have
$f_s(\eta_2) \geq \|\z_2\|_{\b \infty} \geq \|\z_{21}\|_2 = (1+c)\|\z_{11}^*\|_2 > \|\z_1^*\|_{\b \infty} = f_s(\eta_1)$.

The case for $\eta_1 = 0$ is proved by continuity.

\item Next we show $f_s(\eta) > s\eta$ for sufficiently small $\eta > 0$. Take $\z$ as the vector whose first element is $s\eta$ and zero otherwise. We have $\|\z\|_{\b 1} = s\eta$ and $\|\z\|_{\b \infty} = s\eta > \eta$ (recall $s \in (1, s^*)$). In addition, when $\eta > 0$ is sufficiently small, we also have $\|Q\z\|_\diamond \leq 1$. Therefore, for sufficiently small $\eta$, we have $f_s(\eta) \geq s\eta > \eta$.

We next prove the existence of $\eta_B > 0$ and $\rho_B \in (0, 1)$ such that
\begin{eqnarray}\label{eqn:largeeta}
  f_s(\eta) < \rho_B \eta,\ \forall\  \eta > \eta_B.
\end{eqnarray}
We use contradiction to prove this statement. Suppose for all $\eta_B > 0$ and $\rho_B \in (0, 1)$, there exists $\eta > \eta_B$ such that $f_s(\eta) \geq \rho_B\eta$. Construct sequences $\{\eta^{(k)}\}_{k=1}^\infty \subset (0, \infty)$, $\{\rho^{(k)}\}_{k=1}^\infty \subset (0, 1)$, and $\{\z^{(k)}\}_{k=1}^\infty \subset \R^{np}$ such that
\begin{eqnarray}
&&\lim_{k\rightarrow \infty}\eta^{(k)} = \infty,\nonumber\\
&&\lim_{k\rightarrow \infty}\rho^{(k)} = 1, \nonumber\\
&&\rho^{(k)}\eta^{(k)} \leq f_s(\eta^{(k)}) = \|\z^{(k)}\|_{\b \infty},\nonumber\\
&&\|Q\z^{(k)}\|_\diamond \leq 1,\nonumber\\
&&\|\z^{(k)}\|_{\b 1} \leq s \eta^{(k)}.
\end{eqnarray}
Decompose $\z^{(k)} = \z^{(k)}_1 + \z^{(k)}_2$, where $\z^{(k)}_1$ is in the null space of $Q$ and $\z^{(k)}_2$ in the orthogonal complement of the null space of $Q$. The sequence $\{\z_2^{(k)}\}_{k=1}^\infty $ is bounded since $c \|\z_2^{(k)}\| \leq \|Q\z_2^{(k)}\|_\diamond \leq 1$ where $c = \inf_{\z: \z \perp \mathrm{null}(Q)} \|Q\z\|_\diamond/\|\z\| > 0$ and $\|\cdot\|$ is any norm. Then $\infty = \lim_{k\rightarrow \infty} \|\z^{(k)}\|_{\b \infty} \leq  \lim_{k\rightarrow \infty} (\|\z_1^{(k)}\|_{\b \infty}+ \|\z_2^{(k)}\|_{\b \infty})$ implies $\{\z_1^{(k)}\}_{k=1}^\infty$ is unbounded. For sufficiently large $k$, we proceed as follows:
\begin{eqnarray}
 && s^* > \frac{s}{\left(\frac{s}{s^*}\right)^{1/4}} \geq \frac{s\eta^{(k)}}{\rho^{(k)}\eta^{(k)}} \geq \frac{\|\z^{(k)}\|_{\b 1}}{\|\z^{(k)}\|_{\b \infty}}\nonumber\\
 &&\ \ \ \ \ \ \ \ \ \ \ \ \ \ \geq  \frac{\|\z^{(k)}_1\|_{\b 1} - \|\z_2^{(k)}\|_{\b 1}}{\|\z_1^{(k)}\|_{\b \infty}+\|\z_2^{(k)}\|_{\b \infty}} \geq \left(\frac{s}{s^*}\right)^{1/4} \frac{\|\z_1^{(k)}\|_{\b 1}}{\|\z_1^{(k)}\|_{\b \infty}},
\end{eqnarray}
where the second and last inequalities hold only for sufficiently large $k$ and the last inequality is due to the unboundedness of $\{\z_1^{(k)}\}$ and boundedness of $\{\z_2^{(k)}\}$. As a consequence, we have
\begin{eqnarray}
 \frac{\|\z_1^{(k)}\|_{\b 1}}{\|\z_1^{(k)}\|_{\b \infty}} \leq \frac{s}{\sqrt{\frac{s}{s^*}}} = \sqrt{ss^*} < s^* \text{\ with \ } Q\z_1^{(k)} = 0,
\end{eqnarray}
which contradicts the definition of $s^*$.

\item Next we show $f_s(\eta)$ has a unique positive fixed point $\eta^*$, which is equal to $\gamma^* \df 1/\omega_\diamond(Q,s)$. Properties 1) and 3) imply that there must be at least one fixed point.

    To show the uniqueness, we first prove $\gamma^* \geq \eta^*$ for any fixed point $\eta^* = f_s(\eta^*)$. Suppose $\z^*$ achieves the optimal value of the optimization defining $f_s(\eta^*)$, \emph{i.e.},
\begin{eqnarray}
\eta^* = f_s(\eta^*) = \|\z^*\|_{\b \infty}, \|Q\z^*\|_\diamond \leq 1, \|\z^*\|_{\b 1} \leq s\eta^*.
\end{eqnarray}
Since $\|\z^*\|_{\b 1}/\|\z^*\|_{\b \infty} \leq s\eta^*/\eta^* \leq s$, we have
\begin{eqnarray}
  \gamma^* &\geq& \frac{\|\z^*\|_{\b \infty}}{\|Q\z^*\|_\diamond} \geq \eta^*.
\end{eqnarray}

If $\eta^* < \gamma^*$, we define $\eta_0 = (\eta^* + \gamma^*)/2$ and
\begin{eqnarray}\label{eqn:defrho}
\hskip -1cm &&\z^{\mathrm{c}} = \mathrm{argmax}_{\z}{\frac{s\|\z\|_{\b \infty}}{\|\z\|_{\b 1}}} \text{\ s.t. \ } \|Q\z\|_\diamond \leq 1, \|\z\|_{\b \infty} \geq \eta_0, \text{\ and}\nn\\
\hskip -1cm  &&\rho = {\frac{s\|\z^{\mathrm{c}}\|_{\b \infty}}{\|\z^{\mathrm{c}}\|_{\b 1}}}.
\end{eqnarray}
Suppose $\z^{**}$ with $\|Q\z^{**}\|_\diamond = 1$ achieves the optimal value of the optimization defining $\gamma^* = 1/\omega_\diamond(Q,s)$. Clearly, $\|\z^{**}\|_{\b \infty} = \gamma^* > \eta_0$, which implies $\z^{**}$ is a feasible point of the optimization defining $\z^{\mathrm{c}}$ and $\rho$. As a consequence, we have
\begin{eqnarray}
  \rho \geq {\frac{s\|\z^{**}\|_{\b \infty}}{\|\z^{**}\|_{\b 1}}} \geq 1.
\end{eqnarray}

\begin{figure*}[h!t]
\hskip -0cm
\centering
\includegraphics[width = 0.5\textwidth, trim = 0mm 0mm 0mm 0mm, clip]{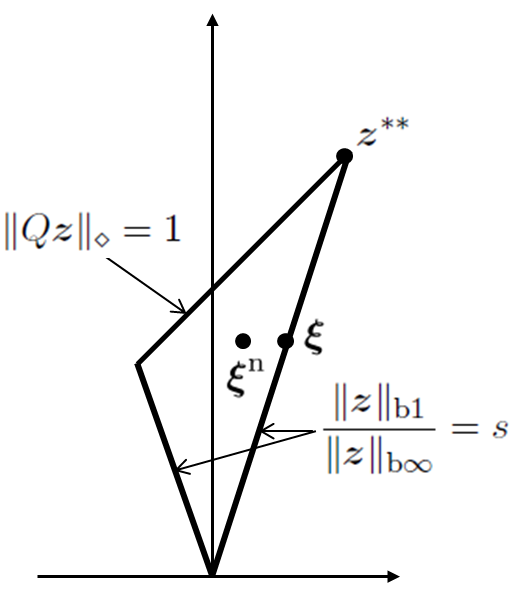}
\caption{Illustration of the proof for $\rho > 1$.}
\label{fig:proof}
\end{figure*}%

Actually we will show that $\rho > 1$. If $\|\z^{**}\|_{\b 1} < s\|\z^{**}\|_{\b \infty}$, we are done. If not (\emph{i.e.}, $\|\z^{**}\|_{\b 1} = s\|\z^{**}\|_{\b \infty}$), as illustrated in Figure \ref{fig:proof}, we consider $\bs \xi = \frac{\eta_0}{\gamma^*}\z^{**}$, which satisfies
\begin{eqnarray}
  &&\|Q\bs \xi\|_\diamond \leq \frac{\eta_0}{\gamma^*} < 1,\\
  &&\|\bs \xi\|_{\b \infty}= \eta_0, \text{\  and \ }\\
  &&\|\bs \xi\|_{\b 1} = s\eta_0.
\end{eqnarray}

To get ${\bs \xi}^{\mathrm{n}}$ as shown in Figure \ref{fig:proof}, pick the block of $\bs \xi$ with smallest non-zero $\ell_2$ norm, and scale that block by a small positive constant less than $1$. Because $s > 1$, $\bs \xi$ has more than one non-zero blocks, implying $\|{\bs \xi}^{\mathrm{n}}\|_{\b \infty}$ will remain the same. If the scaling constant is close enough to $1$, $\|Q{\bs \xi}^{\mathrm{n}}\|_\diamond$ will remain less than 1. But the good news is that $\|{\bs \xi}^{\mathrm{n}}\|_{\b 1}$ decreases, and hence $\rho \geq \frac{s\|{\bs \xi}^{\mathrm{n}}\|_{\b \infty}}{\|{\bs \xi}^{\mathrm{n}}\|_{\b 1}}$ becomes greater than 1.

Now we proceed to obtain the contradiction that $f_s(\eta^*) > \eta^*$. If $\|\z^{\mathrm{c}}\|_{\b 1} \leq s\cdot \eta^*$, then it is a feasible point of
\begin{eqnarray}\label{eqn:subt0}
  \max_{\z} \|\z\|_{\b \infty}\text{\ s.t. \ } \|Q\z\|_\diamond \leq 1, \|\z\|_{\b 1} \leq s\cdot \eta^*.
\end{eqnarray}
As a consequence, $f_s(\eta^*) \geq \|\z^{\mathrm{c}}\|_{\b \infty}\geq \eta_0 > \eta^*$, contradicting that $\eta^*$ is a fixed point, and we are done. If this is not the case, \emph{i.e.}, $\|\z^{\mathrm{c}}\|_{\b 1} > s\cdot \eta^*$, we define a new point
\begin{eqnarray}
  \z^{\mathrm{n}} = \tau \z^{\mathrm{c}}
\end{eqnarray}
with
\begin{eqnarray}
  \tau = \frac{s\cdot \eta^*}{\|\z^\mathrm{c}\|_{\b 1}} < 1.
\end{eqnarray}
Note that $\z^{\mathrm{n}}$ is a feasible point of the optimization defining $f_s(\eta^*)$ since
\begin{eqnarray}
&&\|Q\z^{\mathrm{n}}\|_\diamond = \tau \|Q\z^{\mathrm{c}}\|_\diamond < 1, \text{\ and \ }\\
&&\|\z^{\mathrm{n}}\|_{\b 1} = \tau \|\z^{\mathrm{c}}\|_{\b 1} = s\cdot \eta^*.
\end{eqnarray}
Furthermore, we have
\begin{eqnarray}
  \|\z^{\mathrm{n}}\|_{\b \infty}= \tau \|\z^{\mathrm{c}}\|_{\b \infty}= \rho \eta^*.
\end{eqnarray}
As a consequence, we obtain
\begin{eqnarray}
  f_s(\eta^*) &\geq& \rho \eta^* > \eta^*.
\end{eqnarray}
\begin{figure}[h!t]
\hskip -0cm
\centering
\includegraphics[width = 0.6\textwidth, trim = 0mm 0mm 0mm 0mm, clip]{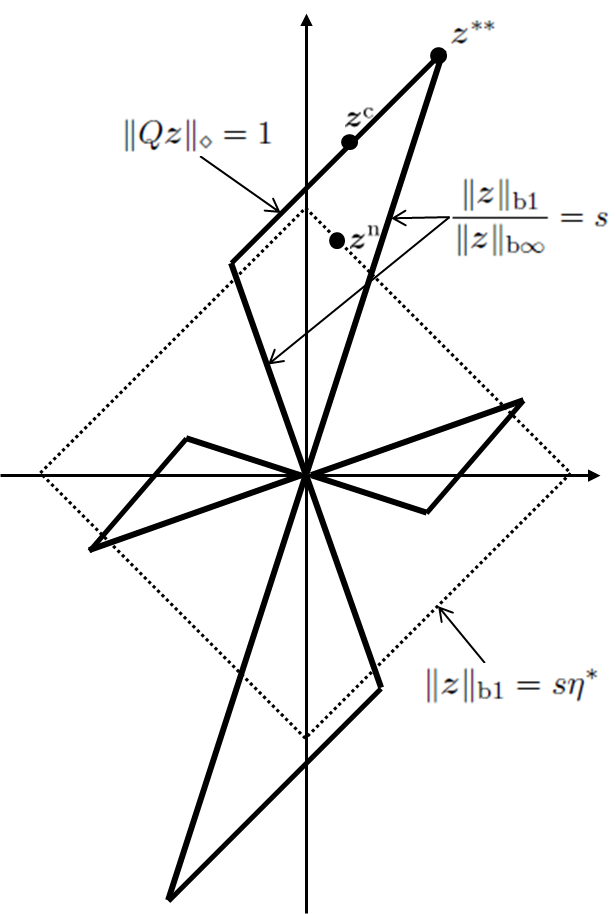}
\caption{Illustration of the proof for $f_s(\eta^*) \geq \rho\eta^*$.}
\label{fig:proof1}
\end{figure}%

Therefore, for any positive fixed point $\eta^*$, we have $\eta^* = \gamma^*$, \emph{i.e.}, the positive fixed point is unique.

\item Property 5) is a consequence of 1), 3), and 4).

\item We demonstrate the existence of $\rho_2(\epsilon)$ only. The existence of $\rho_1(\epsilon)$ can be proved in a similar manner, and hence is omitted.

    We need to show that for fixed $\epsilon > 0$, there exists $\rho(\epsilon) < 1$ such that for any $\eta \geq (1+\epsilon) \eta^*$ we have
\begin{eqnarray}
  f_s(\eta) \leq \rho(\epsilon) \eta.
\end{eqnarray}

In view of \eqref{eqn:largeeta}, we need to prove the above statement only for $\eta \in [(1+\epsilon)\eta^*, \eta_B]$. We use contradiction. Suppose for any $\rho \in (0, 1)$ there exists $\eta \in [(1+\epsilon)\eta^*, \eta_B]$ such that $f_s(\eta) > \rho \eta$. Construct sequences $\{\eta^{(k)}\}_{k=1}^\infty \subset [(1+\epsilon)\eta^*, \eta_B]$ and $\{\rho^{(k)}\}_{k=1}^\infty \subset (0, 1)$ with
\begin{eqnarray}
&& \lim_{k\rightarrow \infty} \rho^{(k)} = 1, \text{\ and}\nonumber\\
&& f_s(\eta^{(k)}) > \rho^{(k)} \eta^{(k)}.
\end{eqnarray}
Due to the compactness of $[(1+\epsilon)\eta^*, \eta_B]$, there must exist a subsequence $\{\eta^{(k_l)}\}_{l=1}^\infty$ of $\{\eta^{(k)}\}_{k=1}^\infty$ such that $\lim_{l\rightarrow \infty}\eta^{(k_l)} = \eta_{\mathrm{lim}}$ for some $\eta_{\mathrm{lim}} \in [(1+\epsilon)\eta^*, \eta_B]$. As a consequence of the continuity of $f_s(\eta)$, we have

\begin{eqnarray}
f_s(\eta_{\mathrm{lim}}) =  \lim_{l\rightarrow \infty} f_s(\eta^{(k_l)}) \geq \lim_{l\rightarrow \infty} \rho^{(k_l)} \eta^{(k_l)} = \eta_{\mathrm{lim}}.
\end{eqnarray}
Again due to the continuity of $f_s(\eta)$ and the fact that $f_s(\eta) < \eta$ for $\eta > \eta_B$, there exists $\eta_c \in [\eta_{\mathrm{lim}}, \eta_B]$ such that
\begin{eqnarray}
  f_s(\eta_c) &=& \eta_c,
\end{eqnarray}
contradicting the uniqueness of the fixed point for $f_s(\eta)$.

The result implies that starting from any initial point below the fixed point, through the iteration $\eta_{t+1} = f_s(\eta_t)$ we can approach an arbitrarily small neighborhood of the fixed point exponentially fast.
\end{enumerate}
\qed
\end{proof}

\subsection{Proof of Proposition \ref{pro:relax_subproblem}}\label{app:pf:relax_subproblem}
\begin{proof}
Introducing an additional variable $\v$ such that $Q\z = \v$, we have
\begin{eqnarray}\label{eqn:medium_obj}
  &&\max_{\z} \left\{\|\z\|_{\b \infty}: \|\z\|_{\b 1} \leq s\eta, \|Q\z\|_\diamond \leq 1\right\}\nonumber\\
  &= & \max_{\z, \v} \left\{\|\z\|_{\b \infty}: \|\z\|_{\b 1} \leq s\eta, Q\z = \v, \|\v\|_\diamond \leq 1\right\}\nonumber\\
    &= & \max_{\z, \v, \u} \left\{\u^T\z: \|\z\|_{\b 1} \leq s\eta, Q\z = \v, \|\v\|_\diamond \leq 1, \|\u\|_{\b 1} \leq 1\right\}\nonumber\\
&= & \max_{\z, \v, \u} \left\{\u^T(\z-P^T(Q\z - \v)): \|\z\|_{\b 1} \leq s\eta, Q\z = \v, \|\v\|_\diamond \leq 1, \|\u\|_{\b 1} \leq 1\right\}\nonumber\\
&\leq & \max_{\z, \v, \u} \left\{\u^T(\I_{np}-P^TQ)\z + \u^T P^T\v: \|\z\|_{\b 1} \leq s\eta, \|\v\|_\diamond \leq 1, \|\u\|_{\b 1} \leq 1\right\},
\end{eqnarray}
where for the last inequality we have dropped the constraint $Q\z = \v$. Similar to \eqref{eqn:upperbound_unitball}, we bound from above the first term in the objective function of \eqref{eqn:medium_obj}
\begin{eqnarray}
&&\max_{\z, \u}\left\{\u^T(\I_{np}-P^TQ)\z: \|\z\|_{\b 1} \leq s\eta, \|\u\|_{\b 1} \leq 1\right\}\nonumber\\
&\leq & \max_{i,j}s\eta \|(\I_{np}-P^TQ)_{i,j}\|_2\nonumber\\
& = & \max_{i,j}s\eta \|\delta_{ij}\I_{n}-P_i^TQ_j\|_2.
\end{eqnarray}
Here $P_i$ (resp. $Q_j$) is the submatrix of $P$ (resp. $Q$) formed by the $((i-1)n+1)$th to $in$th columns (resp. $((j-1)n+1)$th to $jn$th columns). For the second term $\u^TP^T\v$ in the objective function of \eqref{eqn:medium_obj}, the definition of $\|\cdot\|_\diamond^*$, the dual norm of $\|\cdot\|_\diamond$, leads to
\begin{eqnarray}
&& \max_{\u, \v}\left\{\u^TP^T\v: \|\u\|_{\b 1} \leq 1, \|\v\|_\diamond \leq 1\right\}\nonumber\\
&\leq & \max_{\u} \left\{\|P\u\|_\diamond^*: \|\u\|_{\b 1} \leq 1\right\}.
\end{eqnarray}
Since $\|P\u\|_\diamond^*$ is a convex function of $\u$ and the extremal points of the unit ball $\{\u: \|\u\|_{\b 1} \leq 1\}$ are $\{\e_i^p\otimes \v: \|\v\|_2 \leq 1, \v \in \R^n\}$, the above inequality is further bounded from above as
\begin{eqnarray}
&&\max_{\u} \left\{\|P\u\|_\diamond^*: \|\u\|_{\b 1} \leq 1\right\}\nonumber\\
&\leq & \max_i \max_{\v}\{\|P_i\v\|_\diamond^*: \|\v\|_2 \leq 1\} = \max_{i} \|P_i\|_{2,*}.
\end{eqnarray}
Here $\|P_i\|_{2,*}$ is the operator norm of $P_i$ from the space with norm $\|\cdot\|_2$ to the space with norm $\|\cdot\|_\diamond^*$. A tight upper bound on $\max_{\z} \left\{\|\z\|_{\b \infty}: \|\z\|_{\b 1} \leq s\eta, \|Q\z\|_\diamond \leq 1\right\}$ is obtained by minimizing with respect to $P$:
\begin{eqnarray}\label{eqn:generaldualbd}
  &&\max_{\z} \left\{\|\z\|_{\b \infty}: \|\z\|_{\b 1} \leq s\eta, \|Q\z\|_\diamond \leq 1\right\}\nonumber\\
  &\leq & \min_{P}\max_{i, j} s\eta\|\delta_{ij}\I_{n}-P_i^TQ_j\|_2 + \|P_i\|_{2,*}\nonumber\\
    &= & \max_i \min_{P_i}\max_{j} s\eta\|\delta_{ij}\I_{n}-P_i^TQ_j\|_2 + \|P_i\|_{2,*}.
\end{eqnarray}
When $Q = A$ and $\diamond = 2$, the operator norm
\begin{eqnarray}\label{eqn:dual2}
\|P_i\|_{2,*} = \|P_i\|_{2,2} = \|P_i\|_2.
\end{eqnarray}
When $Q = A^TA$ and $\diamond = \b \infty$, we have
\begin{eqnarray}\label{eqn:dualbinf}
&& \|P_i\|_{2,*} = \|P_i\|_{2,\b 1} = \max_{\v: \|\v\|_2 \leq 1} \|P_i\v\|_{\b 1}\nonumber\\
&=& \max_{\v:\|\v\|_2 \leq 1} \sum_{l=1}^p \|P_i^l\v\|_2 \leq \sum_{l = 1}^p\|P_i^l\|_2.
\end{eqnarray}
Here $P_i^l$ is the submatrix of $P$ formed by the $((i-1)n+1)$th to $in$th columns and the $((l-1)n+1)$th to $ln$th rows. Proposition \ref{pro:relax_subproblem} then follows from \eqref{eqn:generaldualbd}, \eqref{eqn:dual2}, and \eqref{eqn:dualbinf}. \qed
\end{proof}

\subsection{Proof of Proposition \ref{pro:propertygs}}\label{app:pf:propertygs}
\begin{proof}
\begin{enumerate}
\item First note that adding an additional constraint $\|P_i\|_2 \leq s\eta$ in the definition of $g_{s, i}$ does not change the definition, because $g_{s,i}(\eta) \leq s\eta$ as easily seen by setting $P_i = 0$:
\begin{eqnarray}
  g_{s, i} &=& \min_{P_i} \{s\eta  (\max_j \|\delta_{ij}\I_n - P_i^T Q_j\|_2) + \|P_i\|_2: \|P_i\|_2 \leq s\eta\}\nonumber\\
  & = & -\max_{P_i} \{-s\eta  (\max_j \|\delta_{ij}\I_n - P_i^T Q_j\|_2) - \|P_i\|_2: \|P_i\|_2 \leq s\eta\}.
\end{eqnarray}
Since the objective function to be maximized is continuous, and the constraint correspondence
\begin{eqnarray}
  C(\eta) = \left\{P_i: {\|P_i\|_2} \leq s \eta\right\}
\end{eqnarray}
is compact-valued and continuous (both upper and lower hemicontinuous), according to Berge's Maximum Theorem \cite{Berge1997maximum} the optimal value function $g_{s, i}(\eta)$ is continuous. The continuity of $g_s(\eta)$ follows from the fact that finite maximization preserves the continuity.

\item To show the strict increasing property, suppose $\eta_1 < \eta_2$ and $P_{i,2}^*$ achieves $g_{s,i}(\eta_2)$. Then we have
\begin{eqnarray}
  g_{s,i}(\eta_1) &\leq&  s\eta _1 \left(\max_j \|\delta_{ij}\I_n - P_{i2}^{*T} Q_j\|_2\right) + \|P_{i2}^*\|_2\nonumber\\
  &<& s\eta _2 \left(\max_j \|\delta_{ij}\I_n - P_{i2}^{*T} Q_j\|_2\right) + \|P_{i2}^*\|_2\nonumber\\
  &=& g_{s,i}(\eta_2).
\end{eqnarray}
The strict increasing of $g_s(\eta)$ then follows immediately.

\item The concavity of $g_{s,i}(\eta)$ follows from the fact that $g_{s,i}(\eta)$ is the minimization of a function of variables $\eta$ and $P_i$, and when $P_i$, the variable to be minimized, is fixed, the function is linear in $\eta$.

\item Next we show that when $\eta > 0$ is sufficiently small $g_s(\eta) \geq s \eta$. For any $i$, we have the following,
    \begin{eqnarray}
g_{s,i}(\eta) &=& \min_{P_i} s\eta  \left(\max_j \|\delta_{ij}\I_n - P_i^T Q_j\|_2\right) + \|P_i\|_2 \nonumber\\
&\geq & \min_{P_i} s\eta  \left(1 -  \|P_i^T Q_i\|_2\right) + \|P_i\|_2\nonumber\\
&\geq & \min_{P_i} s\eta  \left(1 -  \|P_i\|_2\|Q_i\|_2\right) + \|P_i\|_2\nonumber\\
& = & s\eta + \min_{P_i} \|P_i\|_2\left(1 -  s\eta\|Q_i\|_2\right)\nonumber\\
& \geq & s\eta > \eta,
    \end{eqnarray}
where the minimum of the last optimization is achieved at $P_i = 0$ when $\eta < 1/(s\|Q_i\|_2)$. Clearly, $g_s(\eta) = \max_i g_{s,i}(\eta) \geq s\eta > \eta$ for such $\eta$.

Recall that
\begin{eqnarray}
  \frac{1}{s_*} &=&  \max_i\min_{P_i}\max_{j}\|\delta_{ij}\I_n - P_i^TQ_j\|_2.
\end{eqnarray}
Suppose $P_i^*$ is the optimal solution for each $\min_{P_i}\max_{j}\|\delta_{ij}\I_n - P_i^TQ_j\|_2$. For each $i$, we then have
\begin{eqnarray}
  \frac{1}{s_*} &\geq & \max_j \|\delta_{ij}\I_n - P_i^{*T}Q_j\|_2,
\end{eqnarray}
which implies
\begin{eqnarray}
  g_{s,i}(\eta) &=& \min_{P_i} s\eta  \left(\max_j \|\delta_{ij}\I_n - P_i^T Q_j\|_2\right) + \|P_i\|_2 \nonumber\\
  &\leq & s\eta  \left(\max_j \|\delta_{ij}\I_n - P_i^{*T} Q_j\|_2\right) + \|P_i^*\|_2 \nonumber\\
  &\leq & \frac{s}{s_*} \eta + \|P_i^*\|_2.
\end{eqnarray}

As a consequence, we obtain
\begin{eqnarray}
g_s(\eta) = \max_i g_{s,i}(\eta) \leq \frac{s}{s_*} \eta + \max_i \|P_i^*\|_2.
\end{eqnarray}
Pick $\rho \in (s/s_*, 1)$. Then, we have the following when $\eta > \max_i \|P_i^*\|_2/(\rho - s/s_*)$:
\begin{eqnarray}
  g_s(\eta) \leq \rho \eta.
\end{eqnarray}

\item We first show the existence and uniqueness of the positive fixed points for $g_{s,i}(\eta)$. Properties 1) and 4) imply that $g_{s,i}(\eta)$ has at least one positive fixed point. (Interestingly, 2) and 4) also imply the existence of a positive fixed point, see \cite{tarski1955fixedpoint}.) To prove uniqueness, suppose there are two fixed points $0 < \eta_1^* < \eta_2^*$. Pick $\eta_0$ small enough such that $g_{s,i}(\eta_0) > \eta_0 > 0$ and $\eta_0 < \eta_1^*$. Then $\eta_1^* = \lambda \eta_0 + (1-\lambda)\eta_2^*$ for some $\lambda \in (0, 1)$, which implies that $g_{s,i}(\eta_1^*) \geq \lambda g_{s,i}(\eta_0) + (1-\lambda) g_{s,i}(\eta_2^*) > \lambda \eta_0 + (1-\lambda) \eta_2^* = \eta_1^*$ due to the concavity, contradicting $\eta_1^* = g_{s,i}(\eta_1^*)$.

    The set of positive fixed points for $g_s(\eta)$, $\{\eta \in (0, \infty): \eta = g_s(\eta) = \max_i g_{s,i}(\eta)\}$, is a subset of $\bigcup_{i=1}^p \{\eta \in (0, \infty): \eta = g_{s,i}(\eta) \} = \{\eta_i^*\}_{i=1}^p$. We argue that
    \begin{eqnarray}
     \eta^* = \max_i \eta_i^*
    \end{eqnarray}
is the unique positive fixed point for $g_s(\eta)$.

We proceed to show that $\eta^*$ is a fixed point of $g_s(\eta)$. Suppose $\eta^*$ is a fixed point of $g_{s,i_0}(\eta)$. Then it suffices to show that $g_s(\eta^*) = \max_i g_{s,i}(\eta^*) = g_{s,i_0}(\eta^*)$. If this is not the case, there exists $i_1 \neq i_0$ such that $g_{s,i_1}(\eta^*) > g_{s,i_0}(\eta^*) = \eta^*$. The continuity of $g_{s,i_1}(\eta)$ and the property 4) imply that there exists $\eta > \eta^*$ with $g_{s,i_1}(\eta) = \eta$, contradicting the definition of $\eta^*$.

To show the uniqueness, suppose $\eta_1^*$ is fixed point of $g_{s,i_1}(\eta)$ satisfying $\eta_1^* < \eta^*$. Then, we must have $g_{s,i_0}(\eta_1^*) > g_{s,i_1}(\eta_1^*)$, because otherwise the continuity implies the existence of another fixed point of $g_{s,i_0}(\eta)$. As a consequence, $g_s(\eta_1^*) > g_{s,i_1}(\eta_1^*) = \eta_1^*$ and $\eta_1^*$ is not a fixed point of $g_s(\eta)$.
\item This property simply follows from the continuity, the uniqueness, and property 4).
\item We use contradiction to show the existence of $\rho_1(\epsilon)$ in 7). In view of 4), we need only to show the existence of such a $\rho_1(\epsilon)$ that works for $\eta_L \leq \eta \leq (1-\epsilon)\eta^*$ where $\eta_L = \sup\{\eta: g_s(\xi) > s\xi, \forall 0 < \xi \leq \eta\}$. Supposing otherwise, we then construct sequences $\{\eta^{(k)}\}_{k=1}^\infty \subset [\eta_L, (1-\epsilon)\eta^*]$ and $\{\rho_1^{(k)}\}_{k=1}^\infty \subset (1, \infty)$ with
\begin{eqnarray}
&& \lim_{k\rightarrow \infty} \rho_1^{(k)} = 1,\text{\ and}\nonumber\\
&& g_s(\eta^{(k)}) \leq \rho^{(k)} \eta^{(k)}.
\end{eqnarray}
Due to the compactness of $[\eta_L, (1-\epsilon)\eta^*]$, there must exist a subsequence $\{\eta^{(k_l)}\}_{l=1}^\infty$ of $\{\eta^{(k)}\}$ such that $\lim_{l\rightarrow \infty}\eta^{(k_l)} = \eta_{\mathrm{lim}}$ for some $\eta_{\mathrm{lim}} \in [\eta_L, (1-\epsilon)\eta^*]$. As a consequence of the continuity of $g_s(\eta)$, we have

\begin{eqnarray}
g_s(\eta_{\mathrm{lim}}) =  \lim_{l\rightarrow \infty} g_s(\eta^{(k_l)}) \leq \lim_{l\rightarrow \infty} \rho_1^{(k_l)} \eta^{(k_l)} = \eta_{\mathrm{lim}}.
\end{eqnarray}
Again due to the continuity of $g_s(\eta)$ and the fact that $g_s(\eta) < \eta$ for $\eta < \eta_L$, there exists $\eta_c \in [\eta_L, \eta_{\mathrm{lim}}]$ such that
\begin{eqnarray}
  g_s(\eta_c) &=& \eta_c,
\end{eqnarray}
contradicting the uniqueness of the fixed point for $g_s(\eta)$. The existence of $\rho_2(\epsilon)$ can be proved in a similar manner.
\end{enumerate}
\qed
\end{proof}
\bibliographystyle{spmpsci}
\bibliography{/Users/gongguotang/SugarSync/Papers/Material/BibTex/IEEEabrv,/Users/gongguotang/SugarSync/Papers/Material/BibTex/Gongbib}
\end{document}